\DeclareMathOperator{\vect}{vec}
\DeclareMathOperator{\spark}{spark}
\DeclarePairedDelimiter\ceil{\lceil}{\rceil}
\theoremstyle{definition}
\newtheorem{definition}{Definition}[section]
\newtheoremstyle{mytheorem}
  {2pt}
  {2pt}
  {\itshape}
  {}
  {\itshape\bfseries}
  {.}
  {.5em}
  {\thmname{#1}\thmnumber{ #2} \thmnote{ {\the\thm@notefont(#3)}}}
\theoremstyle{mytheorem}
\theoremstyle{remark}
\newtheorem*{remark}{Remark}
\newtheorem{example}{Example}
\newtheorem{theorem}{Theorem}
\newtheorem{lemma}[theorem]{Lemma}
\renewcommand*\env@matrix[1][*\c@MaxMatrixCols c]{%
  \hskip -\arraycolsep
  \let\@ifnextchar\new@ifnextchar
  \array{#1}}
\newcommand{\norm}[1]{\left\lVert#1\right\rVert}
\DeclarePairedDelimiter\abs{\lvert}{\rvert}%
\let\oldabs\abs
\def\abs{\@ifstar{\oldabs}{\oldabs*}}
\newcommand{\subparagraph}{}
\titlespacing*{\section}{0pt}{0.3\baselineskip}{0.1\baselineskip}
\titlespacing*{\subsection}{0pt}{0.2\baselineskip}{0.1\baselineskip}
\newenvironment{proof*}[1][\proofname]{\par
  \pushQED{\qed}%
  \normalfont \partopsep=\z@skip \topsep=\z@skip
  \trivlist
  \item[\hskip\labelsep
        \itshape
    #1\@addpunct{.}]\ignorespaces
}{%
  \popQED\endtrivlist\@endpefalse
}
\def\longversion{}
\begin{document}

\ifdefined\longversion
\else
\setlength{\abovedisplayskip}{3pt}
\setlength{\belowdisplayskip}{3pt}
\setlength{\textfloatsep}{6pt} 
\setlength{\floatsep}{1pt} 
\setlength{\skip\footins}{0.2cm} 
\setlength{\dbltextfloatsep}{5pt}
\setlength{\abovecaptionskip}{2pt plus 0pt minus 0pt} 
\setlength{\belowcaptionskip}{0pt plus 0pt minus 0pt} 
\fi

\title{How Long to Estimate Sparse MIMO Channels}

\author{\IEEEauthorblockN{Yahia Shabara, C. Emre Koksal and Eylem Ekici}
\IEEEauthorblockA{Dept. of ECE,
The Ohio State University, Columbus, OH 43210\\
Email: \{shabara.1, koksal.2 , ekici.2\}@osu.edu}
}

\maketitle

\thispagestyle{plain}
\pagestyle{plain}



\IEEEpeerreviewmaketitle

\begin{abstract}
Large MIMO transceivers are integral components of next-generation wireless networks.
However, for such systems to be practical, their channel estimation process needs to be fast and reliable.
Although several solutions for fast estimation of sparse channels do exist, there is still a gap in understanding the fundamental limits governing this problem.
Specifically, we need to better understand the lower bound on the number of measurements under which accurate channel estimates can be obtained.
This work bridges that knowledge gap by deriving a tight asymptotic lower bound on the number of measurements.
This not only helps develop a better understanding for the sparse MIMO channel estimation problem, but it also provides a benchmark for evaluating current and future solutions.
\end{abstract}

\section{Introduction}
Through the use of a large number of antennas, wireless transceivers can focus their signal transmission and/or reception through very narrow angular directions \cite{tse2005fundamentals}.
This helps increase the channel capacity in two main ways. First, it improves the spatial multiplexing capability of transceivers, which allows  simultaneously serving multiple users while keeping cross interference low.
Second, it allows more signal power to be propagated from a transmitter (TX) to a receiver (RX).
For the latter reason, large MIMO transceivers have emerged as the prominent solution to solve the severe path loss problem in millimeter-wave (mmWave) systems \cite{shabara2019beam, fan2018angle}.

\ifdefined\longversion
The main challenge of large MIMO, however, is that the channel estimation process can be complex \cite{ding2018dictionary}.
This is a byproduct of having channel matrices with large dimensions.
Moreover, both initial and running costs (i.e., cost of hardware and power consumption, respectively) of such devices are high.
To minimize these costs, the architectural design of large MIMO transceivers have deviated from the traditional fully-digital design towards analog or hybrid transceivers.
While these alternative architectures solve the cost problem, they exacerbate the channel estimation overhead.
This is because such alternative transceiver designs are less-flexible than the fully-digital ones.
For example, an analog transceiver can obtain only one independent measurement at a time, unlike a digital transceiver that obtains as many independent measurements as the number of antennas at RX.
\else
The main challenge of large MIMO, however, is that the channel estimation process can be complex \cite{ding2018dictionary} since channel matrices have large dimensions.
This problem is further exacerbated by the practically-viable transceiver designs used to overcome the cost and power consumption problems attributed with the traditional fully-digital transceiver architectures.
\fi

Reducing the number of channel measurements is thus one of the main challenges facing large MIMO implementations.
This problem has largely been tackled as an application of \textbf{Compressed Sensing} (CS) \cite{choi2017compressed,donoho2006compressed}, which relies on channel \textit{sparsity} as a key enabler for reducing the number of measurements\footnote{Sparsity here means that the number of signal propagation paths is small compared to the number of TX and RX antennas (e.g., mmWave channels).}.
The closest effort to understanding how changing the number of measurements affects the quality of channel estimates, to the best of our knowledge, is \cite{Alkhateeb_2015_HowManyMeasurements}, where computer simulations were conducted to measure the quality of channel estimates as the number of measurements increases.
Nonetheless, there is still a gap in the current literature in understanding the \textit{lower bound} on the number of necessary measurements needed for accurate channel recovery.
To the best of our knowledge, the tightest known bound scales as
$\Omega\left(k \log \frac{n_t n_r}{k}\right)$ \cite{CS4Wireless_TipsAndTricks,alkhateeb2014channel}, where $k$ is the channel sparsity level and $n_t$ and $n_r$ are the numbers of antennas at TX and RX, respectively.
This bound, however, is a naive application of the CS bound for recovery of sparse vectors of length $n = n_t n_r$ and $k$ non-zero values.
In fact, the nature of the channel estimation problem poses limitations on how measurements are obtained, as opposed to the standard CS problem.
Thus, more attention needs to be paid when deriving measurement lower bounds.
In this paper, we show that the aforementioned bound is too loose, and we provide a tighter lower bound which has order of $\Omega\left(k^2 \log\left(\frac{n_t}{k}\right) \log\left(\frac{n_r}{k}\right)\right)$.
We argue the tightness of this bound by showing that, under a mild constraint on the channel sparsity level, there exists a solution with a number of measurements upper bounded as $O(k^2 \log\left(\frac{n_t}{k}\right) \log\left(\frac{n_r}{k}\right))$.

\ifdefined\longversion
\textit{Notations:} Let $x$ be a scalar quantity, $\boldsymbol{x}$ be a vector and $\boldsymbol{X}$ be a matrix. The conjugate of $\boldsymbol{X}$ is $\boldsymbol{X}^{\ast}$, its transpose is $\boldsymbol{X}^T$ and its hermition (i.e., conjugate transpose) is $\boldsymbol{X}^H$. Let $\norm{\boldsymbol{x}}_p$ denote the $p^{\text{th}}$ norm of $\boldsymbol{x}$. If the subscript $p$ is dropped, then $\norm{\boldsymbol{x}}$ denotes the Euclidean norm, $\norm{\boldsymbol{x}}_2$.
Define the operator $\vect \left( \boldsymbol{X} \right)$ to be the stacking of all the columns of $\boldsymbol{X}$ to form one vector as follows:
If $\boldsymbol{X}$ has columns $\boldsymbol{x_i}$ for $i = 1, \dots, n$, then
$\vect \left( \boldsymbol{X} \right) = \begin{pmatrix}
\boldsymbol{x_1}^T &
\boldsymbol{x_2}^T &
\dots &
\boldsymbol{x_n}^T
\end{pmatrix}^T$.
We denote by $\otimes$ the Kronecker product.
Finally, we use: (i) $\Omega\left(\cdot\right)$ to denote the Big Omega notation, i.e., the asymptotic lower bound\footnote{We say that $f(n) \in \Omega\left( g(n) \right)$ (or loosely, $f(n) {=} \Omega\left( g(n) \right)$) if there exists a constant $c > 0$, and $n_0 \in \mathbb{N}$ such that $f(n) \geq c g(n)$, for all $n {\geq} n_0$.},
(ii) $O\left(\cdot\right)$ to denote the Big O notation, i.e., the asymptotic upper bound\footnote{We say that $f(n) \in O\left( g(n) \right)$ (or loosely $f(n) {=} O\left( g(n) \right)$) if there exists a constant $c > 0$ and $n_0 \in \mathbb{N}$ such that $f(n) \leq c g(n)$, for all $n {\geq} n_0$.}, and (iii) we say that $f(n) \in \Theta\left(g(n)\right)$ if both $f(n) \in \Omega(g(n))$ and $f(n) \in O(g(n))$.
\else
\textit{Notations:} Let $x$ be a scalar quantity, $\boldsymbol{x}$ be a vector and $\boldsymbol{X}$ be a matrix. The Conjugate, Transpose and Hermition of $\boldsymbol{X}$ are denoted by $\boldsymbol{X}^{\ast}$, $\boldsymbol{X}^T$ and $\boldsymbol{X}^H$, respectively. The $p^{\text{th}}$ norm is denoted by $\norm{\boldsymbol{x}}_p$ (if the subscript $p$ is dropped, then assume $p{=}2$).
Denote by $\vect \left( \boldsymbol{X} \right)$ the vectorization of columns of $\boldsymbol{X}$, and denote by $\otimes$ the Kronecker product.
Finally, we use: (i) $\Omega\left(\cdot\right)$ to denote the Big Omega notation, i.e., the asymptotic lower bound\footnote{We say that $f(n) \in \Omega\left( g(n) \right)$ (or loosely, $f(n) {=} \Omega\left( g(n) \right)$) if there exists a constant $c > 0$, and $n_0 \in \mathbb{N}$ such that $f(n) \geq c g(n)$, for all $n {\geq} n_0$.},
(ii) $O\left(\cdot\right)$ to denote the Big O notation, i.e., the asymptotic upper bound\footnote{We say that $f(n) \in O\left( g(n) \right)$ (or loosely $f(n) {=} O\left( g(n) \right)$) if there exists a constant $c > 0$ and $n_0 \in \mathbb{N}$ such that $f(n) \leq c g(n)$, for all $n {\geq} n_0$.}, and (iii) we say that $f(n) \in \Theta\left(g(n)\right)$ if both $f(n) \in \Omega(g(n))$ and $f(n) \in O(g(n))$.
\fi

\section{System Model}
\label{sec:sysModel}
Consider a single-tap, block-fading, \textbf{\textit{sparse}} MIMO channel between a TX and RX equipped with $n_t$ and $n_r$ antennas, respectively.
Antennas at TX and RX form Uniform Linear Arrays (ULA), with \textit{normalized} antenna spacing of $\Delta_t$ and $\Delta_r$, respectively.
The normalization is with respect to the carrier wavelength, denoted by $\lambda_c$.
We consider \textit{analog} transceiver architectures at both TX and RX.
That is, only one RF chain exists per transceiver, and all antennas are connected to this RF chain through phase-shifters and variable-gain amplifiers.

Let the maximum number of \textit{resolvable} signal propagation paths in the channel be denoted by $k$. Recall that we consider sparse channels.
By the sparsity assumption \cite{ding2018dictionary, gao2015asymptotic, masood2014efficient, sun2017millimeter, choi2017compressed, ma2019deep}, only a few signal propagation paths exist, where $k \ll n_t, n_r$.
Note that a wireless transceiver may not be able to resolve multiple channel paths if they are spatially close.
However, as the number of antennas increases, the transceiver's ability to resolve more paths also increases due to its ability to form narrower antenna beams.
This means that $k$ increases with $n$.
However, the ratio $\frac{k}{n}$ decreases as $n$ increases.
We assume that  $n_t, n_r \geq k^{1+\epsilon}$, for some $\epsilon {>} 0$, which reflects the ability of transceivers to resolve more channel paths as their number of antennas increases.
For each propagation path $p$, let $\alpha_p$ be its path-gain, $\theta_p$ be its Angle of Departure (AoD) at TX, $\phi_p$ be its Angle of Arrival (AoA) at RX, and $\rho_p$ be its path length.
The baseband path gain, $\alpha_p^b$, is given by
\ifdefined\longversion
\begin{equation}
\alpha_p^b = \alpha_p \sqrt{n_t n_r} \exp^{-j \frac{2 \pi \rho_p}{\lambda_c}}.
\end{equation}
\else
{\footnotesize
$\alpha_p^b = \alpha_p \sqrt{n_t n_r} \exp^{-j \frac{2 \pi \rho_p}{\lambda_c}}$.}%
\fi

Let $\boldsymbol{Q} \in \mathbb{C}^{n_r \times n_t}$ denote the channel matrix, where $q_{i,j}$, the element at row $i$ and column $j$ in $\boldsymbol{Q}$, is the channel gain between the $j^{\text{th}}$ TX antenna and the $i^{\text{th}}$ RX antenna.
Let us denote the path-loss by $\mu$. Then, we can write $\boldsymbol{Q}$ as
\begin{equation}
\boldsymbol{Q} = \sum_{p = 1}^k \frac{\alpha_p^b}{\mu} \boldsymbol{e_r}(\Omega_{r,p}) \boldsymbol{e}^H_{\boldsymbol{t}}(\Omega_{t,p}),
\end{equation}
where $\boldsymbol{e_t}(\Omega)$ and $\boldsymbol{e_r}(\Omega)$ are the transmit and receive signal spatial signatures, at angular cosine $\Omega$ \cite[Chapter 7]{tse2005fundamentals}.
\ifdefined\longversion
We define $\boldsymbol{e_i}(\Omega)$ as:
\begin{equation}
\small
\boldsymbol{e_i}(\Omega) = \frac{1}{\sqrt{n_i}}
\begin{pmatrix}
1 \\
\exp^{-j2\pi           \Delta_i \Omega}\\
\exp^{-j2\pi  2        \Delta_i \Omega}\\
\vdots                                 \\
\exp^{-j2\pi (n_i - 1) \Delta_i \Omega}
\end{pmatrix},
\quad \quad  i \in \{t,r\}.
\end{equation}
\fi
The channel $\boldsymbol{Q}$, in this form, is not sparse. However, it can be represented in a sparse form using a simple change of basis:
\begin{equation}
\label{eqn:SparsifyingTheChannel}
\boldsymbol{Q^a} = \boldsymbol{U}_{\boldsymbol{r}}^H \boldsymbol{Q} \boldsymbol{U_t},
\end{equation}
where $\boldsymbol{Q^a}$ is known as the ``\textit{angular channel}'' and is sparse.
The matrices $\boldsymbol{U_t}$ and $\boldsymbol{U_r}$ are Discrete Fourier Transform matrices whose columns represent an orthonormal basis for the transmit and receive signal spaces, and are defined as:
\begin{equation*}\small
\begingroup 
\setlength\arraycolsep{2pt}
\boldsymbol{U_i} =
\begin{pmatrix}
\boldsymbol{e_i}\left( 0 \right) & \boldsymbol{e_i}\left(\frac{1}{L_i}\right) & \boldsymbol{e_i}\left(\frac{2}{L_i}\right) & \dots  & \boldsymbol{e_i}\left(\frac{n_i - 1}{L_i}\right)
\end{pmatrix},
\quad i \in \{t,r\}.
\endgroup
\end{equation*}


When transmitting a symbol $\zeta$, the TX uses a precoder vector $\boldsymbol{f} \in \mathbb{C}^{n_t}$ while RX uses a combiner vector $\boldsymbol{w} \in \mathbb{C}^{n_r}$. The received symbol at RX is thus given by:
\begin{equation}
y_{i,j} = \boldsymbol{w}_i^H \boldsymbol{Q} \boldsymbol{f}_j \zeta + \boldsymbol{w}_i^H \boldsymbol{n_{i,j}},
\end{equation}
where $y_{i,j}$ denotes the received symbol (i.e., measurement result), $\boldsymbol{w}_i$ denotes the $i^{\text{th}}$ receive combiner and $\boldsymbol{f}_j$, the $j^{\text{th}}$ transmit precoder.
Assume, for simplicity, that $\zeta=1$.
Let the number of rx-combiners be $m_r$ and the number of tx-precoders be $m_t$.
Then, the total number of measurements we can obtain using all combinations of $\boldsymbol{f}_j$ and $\boldsymbol{w}_i$ is $m = m_t {\times} m_r$.
We can also write the measurement equations for all precoders and combiners more compactly as:
\begin{equation}
\label{eq:nonAdaptiveMatrixForm}
\boldsymbol{Y} = \boldsymbol{W}^H \boldsymbol{Q} \boldsymbol{F} + \boldsymbol{N},
\end{equation}
where $y_{i,j}$ is the element at row $i$ and column $j$ of $\boldsymbol{Y}$. $\boldsymbol{W}$ and $\boldsymbol{F}$ are defined as:
\small{
\ifdefined\longversion
\begin{align}
\boldsymbol{W} &\triangleq \begin{pmatrix} \boldsymbol{w}_1 & \boldsymbol{w}_2 & \dots & \boldsymbol{w}_{m_r} \end{pmatrix}, \\
\boldsymbol{F} &\triangleq \begin{pmatrix} \boldsymbol{f}_1 & \boldsymbol{f}_2 & \dots & \boldsymbol{f}_{m_t}
\end{pmatrix}
\end{align}
\else
\begin{equation}
\begingroup 
\setlength\arraycolsep{2pt}
\boldsymbol{W} \triangleq \begin{pmatrix} \boldsymbol{w}_1 & \boldsymbol{w}_2 & \dots & \boldsymbol{w}_{m_r} \end{pmatrix}, \quad
\boldsymbol{F} \triangleq \begin{pmatrix} \boldsymbol{f}_1 & \boldsymbol{f}_2 & \dots & \boldsymbol{f}_{m_t}
\end{pmatrix}
\endgroup
\end{equation}
\fi
}%
\normalsize
The channel estimation problem, i.e., figuring out what the matrix $\boldsymbol{Q}$ is, can be broken down into determining the best set of precoders $\boldsymbol{f}_j$ and combiners $\boldsymbol{w}_i$ using which we can accurately recover $\boldsymbol{Q}$. To speed up the estimation process, the smallest sets of those $\boldsymbol{f}_j$'s and $\boldsymbol{w}_i$'s should be used.
In this paper, we do not provide a specific design for such precoders and combiners, but we seek to find a ``tight'' lower bound on the number of measurements using which $\boldsymbol{Q}$ can be recovered.

\ifdefined\longversion
\textbf{Special Cases:}
Suppose the number of TX antennas $n_t {=} 1$. In such case, the channel is Single-Input-Multiple-Output (SIMO), and the channel matrix $\boldsymbol{Q}$ becomes a vector $\boldsymbol{q}$. The precoders at TX also fall back to just a scalar quantity; $f = 1$. Thus, we can rewrite the measurement equation (Eq. (\ref{eq:nonAdaptiveMatrixForm})) as:
\begin{equation}
\label{eqn:SIMO_measurements}
\boldsymbol{y} = \boldsymbol{W}^H \boldsymbol{q} + \boldsymbol{n}
\end{equation}
Similarly, if we have a MISO channel, i.e., $n_r {=} 1$, we have the following measurement equation:
\begin{equation}
\label{eqn:MISO_measurements}
\boldsymbol{y} = \boldsymbol{F}^H \boldsymbol{q} + \boldsymbol{n}
\end{equation}
\else
\textbf{Special Cases:}
Consider the special cases in which either (i) $n_t = 1$ or (ii) $n_r = 1$.
In the former case, the channel is Single-Input-Multiple-Output (SIMO), while in the latter it is MISO. In both cases, the channel matrix $\boldsymbol{Q}$ becomes a vector $\boldsymbol{q}$. The tx-precoders in the MISO scenario, fall back to just a scalar quantity; $f = 1$, while in SIMO, rx-combiners fall back to $w = 1$.
Thus, we can rewrite the measurement equation (Eq. (\ref{eq:nonAdaptiveMatrixForm})) for SIMO and MISO, respectively, as follows
\begin{equation}
\label{eqn:SIMO_MISO_measurements}
\boldsymbol{y} = \boldsymbol{W}^H \boldsymbol{q} + \boldsymbol{n}, \quad \quad \quad
\boldsymbol{y} = \boldsymbol{F}^H \boldsymbol{q} + \boldsymbol{n}.
\end{equation}
\fi

\section{Problem Formulation}
In this section, we will provide a brief overview of compressed sensing (CS). Then, we will formulate the problem of channel estimation as a CS problem.
To that end, we will reshape the measurement equation given in Eq. (\ref{eq:nonAdaptiveMatrixForm}) to be in the form
$\boldsymbol{y_v} {=} \boldsymbol{G_v} \boldsymbol{q^a_v} {+} \boldsymbol{n_v}$,
which conforms with the traditional compressed sensing problem, as will be shown in Eq. (\ref{eq:CS_stdForm_withNoise}) below.
Here, $\boldsymbol{q^a_v}$ is sparse and has dimensions $n_r n_t {\times} 1$.
\subsection{Compressed Sensing Background}
Compressed sensing is a signal processing technique \cite{donoho2006compressed} that allows the reconstruction of a signal $\boldsymbol{x} = \left( x_i \right)_{i = 1}^n$ from a small number of samples \textit{given that} $\boldsymbol{x}$ is either: (i) sparse, or (ii) can be represented in a sparse form, using a linear transformation $\boldsymbol{U}$ such that $\boldsymbol{x} = \boldsymbol{U} \boldsymbol{s}$ where $\boldsymbol{s}$ is sparse.
Let the number of measurements be denoted by $m$ where $m < n$ and $m,n \in \mathbb{N}$.
Each measurement of $\boldsymbol{x}$ is a linear combination of its components $x_i$.
Such measurements are dictated by the sensing matrix $\boldsymbol{G}$ and are given by
\ifdefined\longversion
\begin{equation}
\label{eq:CS_stdForm}
\boldsymbol{y} = \boldsymbol{G} \boldsymbol{x},
\end{equation}
where $\boldsymbol{y}$ denotes the $m {\times} 1$ measurement vector.
Eq. (\ref{eq:CS_stdForm}) represents an under-determined system of linear equations (since $m < n$).
\else
$\boldsymbol{y} = \boldsymbol{G} \boldsymbol{x}$,
where $\boldsymbol{y}$ denotes the $m {\times} 1$ measurement vector. The matrix equation $\boldsymbol{y} = \boldsymbol{G} \boldsymbol{x}$ represents an under-determined system of linear equations (since $m < n$).
\fi
In other words, we have fewer equations than the number of unknowns we want to solve for.
While, in general, an infinite number of solutions exist, the sparsity of $\boldsymbol{x}$ allows for perfect signal reconstruction from $\boldsymbol{y}$ \textit{given that} certain conditions are satisfied, among which, is a lower bound on the ``\textit{spark}'' of the sensing matrix.
\begin{definition}
The spark of a given matrix $\boldsymbol{G}$ is the smallest number of its linearly dependent columns.
\end{definition}
\begin{theorem} [Corollary 1 of \cite{donoho2003optimally}]
\label{thm:spark}
For any vector $\boldsymbol{y} \in \mathbb{R}^m$, there exits at most one vector $\boldsymbol{q^a} \in \mathbb{R}^n$ with $\norm{\boldsymbol{q^a}}_0 = k$ such that $\boldsymbol{y} = \boldsymbol{G} \boldsymbol{q^a}$ if and only if $\spark(\boldsymbol{G}) > 2k$.
\end{theorem}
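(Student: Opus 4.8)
The plan is to prove the equivalence through the single elementary observation that $\boldsymbol{G}\boldsymbol{q_1}=\boldsymbol{G}\boldsymbol{q_2}$ holds if and only if $\boldsymbol{q_1}-\boldsymbol{q_2}$ lies in the null space of $\boldsymbol{G}$, together with the subadditivity of $\norm{\cdot}_0$ under addition; each implication is then handled by its contrapositive. Throughout, $\boldsymbol{q_1}$ and $\boldsymbol{q_2}$ denote two candidates for the vector $\boldsymbol{q^a}$ in the statement.

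\emph{Sufficiency of $\spark(\boldsymbol{G})>2k$.} Assume $\boldsymbol{G}\boldsymbol{q_1}=\boldsymbol{G}\boldsymbol{q_2}=\boldsymbol{y}$ for some $\boldsymbol{q_1}\neq\boldsymbol{q_2}$ with $\norm{\boldsymbol{q_1}}_0=\norm{\boldsymbol{q_2}}_0=k$, and set $\boldsymbol{h}=\boldsymbol{q_1}-\boldsymbol{q_2}$. Then $\boldsymbol{h}\neq\boldsymbol{0}$, $\boldsymbol{G}\boldsymbol{h}=\boldsymbol{0}$, and because the support of $\boldsymbol{h}$ is contained in the union of the supports of $\boldsymbol{q_1}$ and $\boldsymbol{q_2}$ we have $\norm{\boldsymbol{h}}_0\leq 2k$. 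Restricting the identity $\boldsymbol{G}\boldsymbol{h}=\boldsymbol{0}$ to the at most $2k$ columns of $\boldsymbol{G}$ indexed by the support of $\boldsymbol{h}$ produces a nontrivial linear dependence among at most $2k$ columns, i.e.\ $\spark(\boldsymbol{G})\leq 2k$. Contrapositively, $\spark(\boldsymbol{G})>2k$ forces at most one $k$-sparse preimage of any $\boldsymbol{y}$.

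\emph{Necessity of $\spark(\boldsymbol{G})>2k$.} Conversely, suppose $\spark(\boldsymbol{G})=\sigma\leq 2k$. By definition some $\sigma$ columns of $\boldsymbol{G}$ are linearly dependent, and by minimality of $\sigma$ every column in a minimal dependent set occurs with a nonzero coefficient; collecting those coefficients gives a vector $\boldsymbol{h}$ with $\boldsymbol{G}\boldsymbol{h}=\boldsymbol{0}$ and $\norm{\boldsymbol{h}}_0=\sigma\leq 2k$. I would split the support of $\boldsymbol{h}$ into two index sets of size at most $k$, take $\boldsymbol{q_1}$ equal to $\boldsymbol{h}$ on the first set and $\boldsymbol{0}$ elsewhere, and $\boldsymbol{q_2}$ equal to $-\boldsymbol{h}$ on the second set and $\boldsymbol{0}$ elsewhere; then $\boldsymbol{q_1}-\boldsymbol{q_2}=\boldsymbol{h}\neq\boldsymbol{0}$, so $\boldsymbol{q_1}\neq\boldsymbol{q_2}$ while $\boldsymbol{G}\boldsymbol{q_1}=\boldsymbol{G}\boldsymbol{q_2}=:\boldsymbol{y}$. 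To match the stated hypothesis $\norm{\cdot}_0=k$ exactly, I would finally pad both vectors with a common nonzero pattern on coordinates outside the support of $\boldsymbol{h}$ (and, should the two halves be unequal, also add a common pattern on a suitable number of coordinates of the larger half), which leaves the difference $\boldsymbol{h}$ unchanged and raises $\norm{\boldsymbol{q_1}}_0$ and $\norm{\boldsymbol{q_2}}_0$ both to $k$. Enough spare coordinates are available because $n=n_t n_r\gg 2k$, which follows from the standing assumption $n_t,n_r\geq k^{1+\epsilon}$. Thus $\boldsymbol{y}$ has two distinct $k$-sparse preimages, so uniqueness fails.

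\emph{Anticipated obstacle.} The conceptual heart of the argument --- translating non-uniqueness into a short null-space vector and back --- is immediate once one notes $\norm{\boldsymbol{a}+\boldsymbol{b}}_0\leq\norm{\boldsymbol{a}}_0+\norm{\boldsymbol{b}}_0$. The only part that needs genuine care is the bookkeeping in the necessity direction: arranging the two witnesses to have $\ell_0$-norm \emph{exactly} $k$ rather than at most $k$, without accidentally making $\boldsymbol{q_1}=\boldsymbol{q_2}$ or shrinking $\norm{\boldsymbol{q_1}-\boldsymbol{q_2}}_0$ below $\norm{\boldsymbol{h}}_0$; this is exactly where the largeness of $n$ is used. (If one only asks for $\norm{\cdot}_0\leq k$, this padding step disappears and the equivalence is essentially one line in each direction.)
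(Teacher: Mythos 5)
The paper never proves this statement---it is quoted verbatim as Corollary 1 of the cited Donoho--Elad reference---so there is no in-paper proof to compare against; your argument is the standard null-space/spark equivalence used in that source (non-uniqueness $\Leftrightarrow$ a nonzero null vector of sparsity $\leq 2k$ $\Leftrightarrow$ $\spark(\boldsymbol{G}) \leq 2k$) and it is correct. Your extra bookkeeping to meet the exact-equality condition $\norm{\boldsymbol{q^a}}_0 = k$ (rather than the more usual $\norm{\boldsymbol{q^a}}_0 \leq k$) is handled soundly, the only implicit requirement being roughly $n \geq 2k$, which is guaranteed in the paper's regime $n = n_t n_r \geq k^{2+2\epsilon}$.
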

Theorem \ref{thm:spark} provides a mathematical guarantee on the exact recovery of $k-$sparse vectors using $m$ linear measurements.
An immediate bound on the number of measurements, $m$, we get from Theorem \ref{thm:spark} is
\ifdefined\longversion
  \begin{equation}
    m \geq 2k.
  \end{equation}
\else
  $m \geq 2k.$
\fi
\ifdefined\longversion
The spark lower bound on the matrix $\boldsymbol{G}$ works well under noise-free measurements.
In practice, however, measurements are corrupted with an error vector $\boldsymbol{n}$, i.e.,
\else
The lower bound on the spark of $\boldsymbol{G}$ works well under noise-free measurements, but in practice, measurements get corrupted with a vector $\boldsymbol{n}$, i.e.,
\fi
\begin{equation}
\label{eq:CS_stdForm_withNoise}
\boldsymbol{y} = \boldsymbol{G} \boldsymbol{x} + \boldsymbol{n}.
\end{equation}
It is necessary to guarantee that the measurement process is not adversely affected by such errors in a significant way.
This calls for alternative, stricter requirements on sensing matrices to guarantee ``good'' sparse recovery.
Mathematically, we need to design the sensing matrix such that the energy in the measured signal is preserved.
This is quantified using the \textbf{Restricted Isometry Property (RIP)}.
The RIP property guarantees that the distance between any pair of $k-$sparse vectors is not significantly changed under the measurement process.
This RIP property is defined as follows:
\begin{definition}
A matrix $\boldsymbol{G}$ satisfies the restricted isometry property (RIP) of order $k$ if there exists a constant $\delta_k \in (0,1)$ such that for all vectors $\boldsymbol{q^a}$, with $\norm{\boldsymbol{q^a}}_0 \leq k$, we have
\begin{equation}\label{eq:RIP}
(1 - \delta_k) \norm{\boldsymbol{q^a}}_2^2 \leq \norm{\boldsymbol{G}\boldsymbol{q^a}}_2^2 \leq (1 + \delta_k) \norm{\boldsymbol{q^a}}_2^2.
\end{equation}
\end{definition}
The smallest $\delta_k$ which satisfies Eq. (\ref{eq:RIP}) is called the ``$k-$restricted isometry constant''.
Note that in general, a matrix $\tilde{\boldsymbol{G}}$ does not necessarily result in $\small ||\tilde{\boldsymbol{G}}\boldsymbol{q^a}||^2$ that is symmetric about $1$. However, a simple scaling of $\tilde{\boldsymbol{G}}$ results in $\boldsymbol{G}$ such that the tightest bounds of $\small \norm{\boldsymbol{G}\boldsymbol{q^a}}^2$ in Eq. (\ref{eq:RIP}) are symmetric \cite{eldar2012compressed}.
From now on, we will only consider matrices whose bounds are symmetric as shown in Eq. (\ref{eq:RIP}). 

The following theorem provides a necessary condition for $m {\times} n$ matrices that satisfy the RIP property with $\delta_{k} {\in} \left(0,1\right)$.
\begin{theorem}[Theorem 3.5 of \cite{davenport2010random}]
\label{thm:davenPortBnd}
Let $\boldsymbol{G}$ be an $m {\times} n$ matrix that satisfies RIP of order $k$ with constant $\delta_{k} {\in} \left(0,1\right)$. Then,
\begin{equation}
\label{eq:CS_boundRIP}
m \geq c_{\delta} k \log\left( \frac{n}{k} \right)
\end{equation}
where $c_{\delta} = \frac{0.18}{\log\left( \sqrt{\frac{1+\delta}{1-\delta}}+1 \right)}$, is a function of $\delta$ only.
\end{theorem}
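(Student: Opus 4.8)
The plan is to prove the bound by a volume (packing, or Gilbert--Varshamov) argument. A matrix satisfying RIP of order $k$ is a near-isometry on the whole set of $k$-sparse vectors, so it must carry any large, well-separated family of $k$-sparse vectors to an equally large, well-separated family of points inside a bounded ball of $\mathbb{R}^m$; since such a ball can host only $e^{O(m)}$ mutually separated points, comparing the two cardinalities forces $m$ to be at least of order $k\log(n/k)$.

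\textbf{Step 1 (a rich packing of sparse vectors).} First I would build a family $\mathcal{X}=\{\boldsymbol{x}_1,\dots,\boldsymbol{x}_M\}\subset\mathbb{R}^n$ in which every $\boldsymbol{x}_i$ is $\lceil k/2\rceil$-sparse with $\norm{\boldsymbol{x}_i}_2=1$, any two obey $\norm{\boldsymbol{x}_i-\boldsymbol{x}_j}_2\ge 1$, and $\log M\ge c'\,k\log(n/k)$ for an absolute constant $c'>0$. The construction is combinatorial: let the support of each $\boldsymbol{x}_i$ be an $\lceil k/2\rceil$-subset of $[n]$ carrying a fixed value on each support coordinate (chosen so that $\norm{\boldsymbol{x}_i}_2=1$), and select supports greedily so that no two of them overlap in more than half of their elements --- a Gilbert--Varshamov-type code on the supports. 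Each chosen support rules out at most $\binom{\lceil k/2\rceil}{\lceil k/4\rceil}\binom{n}{\lceil k/4\rceil}$ others out of the $\binom{n}{\lceil k/2\rceil}$ candidate supports, and estimating this ratio leaves room for $\log M=\Theta\!\big(k\log(n/k)\big)$ selected supports. Using $\lceil k/2\rceil$-sparse vectors (rather than $k$-sparse ones) is deliberate: it guarantees that each difference $\boldsymbol{x}_i-\boldsymbol{x}_j$ is itself $k$-sparse, so the order-$k$ RIP hypothesis applies to it.

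\textbf{Steps 2--4 (push forward by $\boldsymbol{G}$, count, and combine).} Apply $\boldsymbol{G}$ to each $\boldsymbol{x}_i$. The upper RIP inequality gives $\norm{\boldsymbol{G}\boldsymbol{x}_i}_2\le\sqrt{1+\delta_k}$, so all $M$ images lie in the ball of radius $R:=\sqrt{1+\delta_k}$ in $\mathbb{R}^m$; the lower RIP inequality, applied to the $k$-sparse vector $\boldsymbol{x}_i-\boldsymbol{x}_j$, gives $\norm{\boldsymbol{G}\boldsymbol{x}_i-\boldsymbol{G}\boldsymbol{x}_j}_2\ge\sqrt{1-\delta_k}$. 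Hence $\{\boldsymbol{G}\boldsymbol{x}_i\}_{i=1}^{M}$ is a set of $M$ points in a radius-$R$ ball that are pairwise at least $2r:=\sqrt{1-\delta_k}$ apart. Placing disjoint radius-$r$ balls around the image points --- all contained in a ball of radius $R+r$ --- and comparing Lebesgue volumes (which scale like the $m$-th power of the radius) gives
\[
M\;\le\;\Big(\tfrac{R+r}{r}\Big)^{m}\;=\;\Big(1+2\sqrt{\tfrac{1+\delta_k}{1-\delta_k}}\Big)^{m}\;\le\;\Big(1+\sqrt{\tfrac{1+\delta_k}{1-\delta_k}}\Big)^{2m},
\]
the last step using $(1+t)^2\ge 1+2t$. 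Taking logarithms and inserting the lower bound $\log M\ge c'k\log(n/k)$ from Step 1 yields
\[
m\;\ge\;\frac{c'}{2}\cdot\frac{k\log(n/k)}{\log\!\big(1+\sqrt{(1+\delta_k)/(1-\delta_k)}\big)},
\]
and absorbing the lower-order $O(k)$ slack from the packing count (harmless because $n/k$ is large --- guaranteed here by $n\ge k^{1+\epsilon}$) into the constant, together with the optimized form of the packing construction, delivers the stated value $c_\delta = 0.18/\log\!\big(\sqrt{(1+\delta)/(1-\delta)}+1\big)$.

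\textbf{Where the difficulty lies.} The delicate step is the packing lemma (Step 1): one needs a family whose log-cardinality is genuinely of order $k\log(n/k)$, with a leading constant good enough to reach the claimed $c_\delta$, while keeping every pairwise difference within the order-$k$ sparsity class so that the RIP hypothesis can legitimately be applied. The binomial estimates in the greedy count must be carried out with enough care that the $\binom{\cdot}{\cdot}$ factors do not erode the constant; the remainder --- the volume bound in $\mathbb{R}^m$ and the algebra isolating $m$ --- is routine.
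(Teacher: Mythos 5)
The paper offers no proof of this statement: it is imported verbatim as Theorem 3.5 of \cite{davenport2010random}, so there is no internal argument to compare against. Your packing-plus-volume sketch is, in outline, exactly the proof given in that reference: build a Gilbert--Varshamov-type family of well-separated sparse unit vectors whose pairwise differences stay inside the order-$k$ sparsity class, push it through $\boldsymbol{G}$ using both RIP inequalities, and then count separated points in a ball of $\mathbb{R}^m$ by comparing volumes. The chain $M \leq \left(1+2\sqrt{\tfrac{1+\delta_k}{1-\delta_k}}\right)^{m} \leq \left(1+\sqrt{\tfrac{1+\delta_k}{1-\delta_k}}\right)^{2m}$ is correct, and the resulting bound $m = \Omega\left(k\log\frac{n}{k}\right)$ --- which is all the surrounding paper actually uses --- is fully supported by your argument.

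Two details would need repair to get the statement exactly as written. First, with $\lceil k/2\rceil$-sparse vectors and $k$ odd, a difference $\boldsymbol{x}_i-\boldsymbol{x}_j$ can have $2\lceil k/2\rceil = k+1$ nonzeros, so the order-$k$ RIP hypothesis does not apply to it; use $\lfloor k/2\rfloor$-sparse vectors (or treat odd $k$ separately) so differences are genuinely $k$-sparse. Second, the specific constant $c_{\delta}=0.18/\log\left(\sqrt{\tfrac{1+\delta}{1-\delta}}+1\right)$ is asserted rather than derived: the greedy count for $\lfloor k/2\rfloor$-sized supports overlapping in at most half their elements gives $\log M \gtrsim \tfrac{k}{4}\log\frac{n}{k} - O(k)$, and after dividing by $2\log\left(1+\sqrt{\tfrac{1+\delta}{1-\delta}}\right)$ this leaves a leading constant near $1/8$, not $0.18$; moreover, absorbing the $O(k)$ slack by invoking $n\geq k^{1+\epsilon}$ uses an assumption of this paper's channel model that the theorem itself does not carry. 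Reaching the stated $0.18$ requires the sharper bookkeeping done in the cited source, so as a blind reconstruction your proof establishes the theorem with a smaller absolute constant --- harmless for every use made of it here, but not literally the quoted statement.
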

Theorem \ref{thm:davenPortBnd} demonstrates the popular asymptotic measurement bound:
\ifdefined\longversion
  \begin{equation}
    \label{eqn:popular_bound}
    m = \Omega\left(k \log \frac{n}{k}\right).
  \end{equation}
\else
  $m = \Omega\left(k \log \frac{n}{k}\right)$.
\fi
Next, we will formulate the MIMO channel estimation as a compressed sensing problem.
\subsection{The Problem}
Recall from Eq. (\ref{eq:nonAdaptiveMatrixForm}) that channel measurements take the form
\ifdefined\longversion
\begin{equation*}
\boldsymbol{Y} = \boldsymbol{W}^H \boldsymbol{Q} \boldsymbol{F} + \boldsymbol{N}.
\end{equation*}
\else
$\boldsymbol{Y} {=} \boldsymbol{W}^H \boldsymbol{Q} \boldsymbol{F} + \boldsymbol{N}$.
\fi
This is not the standard form of a noisy CS problem (see Eq. (\ref{eq:CS_stdForm_withNoise})).
Thus, it cannot readily be solved using compressed sensing.
\ifdefined\longversion
To put this equation in a CS problem form, let us ``\textit{vectorize}'' its left and right hand sides as follows:
\begin{itemize}
\item Let $\boldsymbol{y_v} = \vect \left( \boldsymbol{Y} \right)$
\item Let $\boldsymbol{n_v} = \vect \left( \boldsymbol{N} \right)$
\item And by the properties of \textit{vectorization} \cite{dhrymes1978mathematics}, we have
\footnotesize{
  \begin{align}
    \vect \left( \boldsymbol{W}^H  \boldsymbol{Q} \boldsymbol{F}\right)
    & = \left( \boldsymbol{F}^T \otimes \boldsymbol{W}^H \right) \vect \left( \boldsymbol{Q}
    \right) \\
    & = \left( \boldsymbol{F}^T \otimes \boldsymbol{W}^H \right) \vect \left( \boldsymbol{U}
    _{\boldsymbol{r}} \boldsymbol{Q^a} \boldsymbol{U}^H_{\boldsymbol{t}} \right) \\
    & = \left( \boldsymbol{F}^T \otimes \boldsymbol{W}^H \right)
    \left( \boldsymbol{U}_{\boldsymbol{t}}^{\ast} \otimes \boldsymbol{U}_{\boldsymbol{r}}
     \right)
    \vect \left( \boldsymbol{Q^a} \right)\\
    & = \left( \boldsymbol{F}^T \otimes \boldsymbol{W}^H \right)
    \left( \boldsymbol{U}_{\boldsymbol{t}}^{\ast} \otimes \boldsymbol{U}_{\boldsymbol{r}}
    \right)
    \boldsymbol{q^a_v}\\
    & = \left( \left( \boldsymbol{F}^T \boldsymbol{U}^{\ast}_{\boldsymbol{t}} \right) \otimes
    \left( \boldsymbol{W}^H \boldsymbol{U_r} \right) \right) \boldsymbol{q^a_v}\\
    & = \left( \left( \boldsymbol{F}^H \boldsymbol{U_t} \right)^{\ast} \otimes \left(
    \boldsymbol{W}^H \boldsymbol{U_r} \right) \right) \boldsymbol{q^a_v}
  \end{align}}%
  \normalsize
\end{itemize}
\else
To put this equation in a CS problem form, let us ``\textit{vectorize}'' its left and right hand sides as follows:
(i)   let $\boldsymbol{y_v}   {\triangleq} \vect \left( \boldsymbol{Y}   \right)$,
(ii)      $\boldsymbol{n_v}   {\triangleq} \vect \left( \boldsymbol{N}   \right)$,
(iii)     $\boldsymbol{q^a_v} {\triangleq} \vect \left( \boldsymbol{Q^a} \right)$, and finally
(iv) by properties of \textit{vectorization} \cite{dhrymes1978mathematics}, we have
\par\nobreak
\vspace{-9pt}
{\footnotesize
  \begin{align}
    \vect ( \boldsymbol{W}^H  \boldsymbol{Q} \boldsymbol{F})
    & = ( \boldsymbol{F}^T \otimes \boldsymbol{W}^H ) \vect ( \boldsymbol{Q}) \\
    & = ( \boldsymbol{F}^T \otimes \boldsymbol{W}^H )
    \left( \boldsymbol{U}_{\boldsymbol{t}}^{\ast} \otimes \boldsymbol{U}_{\boldsymbol{r}}
     \right)
    \vect \left( \boldsymbol{Q^a} \right)\\
    & = \left( \left( \boldsymbol{F}^H \boldsymbol{U_t} \right)^{\ast} \otimes \left(
    \boldsymbol{W}^H \boldsymbol{U_r} \right) \right) \boldsymbol{q^a_v}
  \end{align}}%
\fi
Thus, we can rewrite the measurement equation in (\ref{eq:nonAdaptiveMatrixForm}) as
\small{
\begin{align}
\boldsymbol{y_v} &= \boldsymbol{G_v} \boldsymbol{q^a_v}  + \boldsymbol{n_v}, \label{eq:sensingVectEq} \\
\textit{where} \quad \boldsymbol{G_v} &= \left( \boldsymbol{F}^H \boldsymbol{U_t} \right)^{\ast} \otimes \left( \boldsymbol{W}^H \boldsymbol{U_r} \right) \label{eq:sensingVect}
\end{align}
}%
\normalsize
is the sensing matrix, with dimensions $m_t m_r {\times} n_t n_r$, while $\boldsymbol{y_v}$ has dimensions $m_t m_r {\times} 1$ and $\boldsymbol{q^a_v}$ has dimensions $n_t n_r {\times} 1$.
This form of the problem allows us to employ CS sparse recovery techniques to estimate $\boldsymbol{q^a_v}$ from $\boldsymbol{y_v}$.

\section{Lower Measurement Bound}
We are interested in sensing matrices that preserve the \textbf{\textit{distance}} between two different channels $\boldsymbol{q^a_{v1}}$ and $\boldsymbol{q^a_{v2}}$. This distance is the norm of $\boldsymbol{q^a_{v1}} - \boldsymbol{q^a_{v2}}$, which has a sparsity level of $2k$ (recall that the maximum number of channel paths is $k$).
Thus, to be able to accurately estimate $\boldsymbol{q^a_v}$, we need the sensing matrix $\boldsymbol{G_v}$ to satisfy the RIP property of order $2k$ with some RIP constant $\delta_{2k} \in (0,1)$.
At sparsity level of $2k$, Theorem \ref{thm:davenPortBnd} shows that the recovery of a sparse vector with dimensions $n {=} n_t n_r$ requires a number of measurements, $m$, lower bounded as
\ifdefined\longversion
  \begin{align}\small
    m
    \small &\geq c_{\delta} (2k) \log \left( \frac{n_t n_r}{(2k)} \right) \\
    \small &= 2 c_{\delta} k \left( \log \left( \frac{n_t}{\sqrt{2k}} \right) + \log \left(
    \frac{n_r}{\sqrt{2k}} \right) \right). \label{eq:CSWrongBound}
\end{align}
\else
    $\small m \geq c_{\delta} (2k) \log \left( \frac{n_t n_r}{(2k)} \right)
    = 2 c_{\delta} k \left( \log \left( \frac{n_t}{\sqrt{2k}} \right)
    + \log \left( \frac{n_r}{\sqrt{2k}} \right) \right)$.
\fi
This demonstrates the popular $m = \Omega\left(k \log\left(\frac{n_r \times  n_t}{k}\right)\right)$ lower bound for sparse channel estimation.
Although this bound is valid, it is in fact too loose since it assumes that arbitrary constructions of $\boldsymbol{G}_v$ are possible.
This, however, is not the case for sparse MIMO channel estimation since $\boldsymbol{G}_v$ takes a special, Kronecker product form, as derived in Eq. (\ref{eq:sensingVect}).

\ifdefined\longversion
\begin{figure*}
\centering
\begin{subfigure}{.49\textwidth}
\centering
\includegraphics[height=3.5cm, width=0.9\linewidth]{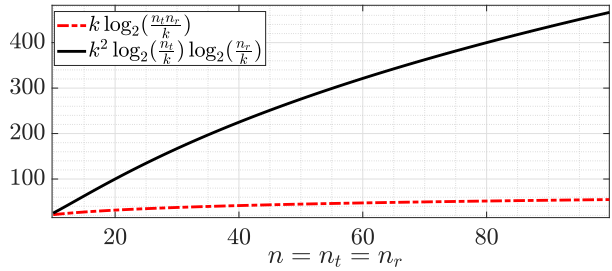}
\caption{At fixed sparsity level $k = 5$.}
\end{subfigure}%
\begin{subfigure}{.49\textwidth}
\centering
\includegraphics[height=3.5cm, width=0.9\linewidth]{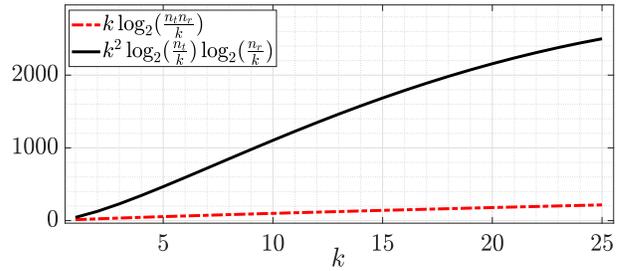}
\caption{At fixed number of antennas $n = n_t = n_r = 100$.}
\end{subfigure}
\caption{Unscaled asymptotic measurement lower bounds.}
\label{fig:LBfuncs}
\end{figure*}
\else
\begin{figure}
\centering
\begin{subfigure}{.49\textwidth}
\centering
\includegraphics[height=2.64cm, width=0.7\linewidth]{Figs/fixed_k.eps}
\caption{At fixed sparsity level $k = 5$.}
\end{subfigure}
\begin{subfigure}{.49\textwidth}
\centering
\includegraphics[height=2.64cm, width=0.7\linewidth]{Figs/fixed_n.eps}
\caption{At fixed number of antennas $n = n_t = n_r = 100$.}
\end{subfigure}
\caption{Unscaled asymptotic measurement lower bounds.}
\label{fig:LBfuncs}
\end{figure}
\fi

Next, we will derive a tighter bound on the number of measurements. A bound that considers the special structure of the sensing matrix.
This will result in $m = \Omega\left( k^2 \log \left( \frac{n_t}{k}\right) \log \left( \frac{n_r}{k}\right) \right)$.
To appreciate how much tighter our derived bound is, we plot the functions $k \log\left(\frac{n_t \times  n_r}{k}\right)$ and $k^2 \log \left( \frac{n_t}{k}\right) \log \left( \frac{n_r}{k}\right)$ without constant scaling in Fig. \ref{fig:LBfuncs}.

\subsection{Main Results: A ``Tight'' Measurement Bound}
In this section, we will derive the relationship between $k-$RIP constants of Kronecker product matrices and those of the blocks that form it. Then, using Theorem \ref{thm:davenPortBnd}, we will derive an asymptotic lower bound on the number of rows of $\boldsymbol{G_v}$ and deduce its asymptotic behavior. We will finally show the tightness of our derived asymptotic bound using the solution framework in \cite{shabara2019source}.

\textbf{Optimum Measurement Length:}
Among all possible matrices which satisfy the RIP property, we are interested in the ones that have the least number of rows (since the number of rows equals the number of measurements). This leads to the notion of ``\textit{Optimum Measurement Length} (OML)''.
\textit{We define OML as the smallest number of measurements such that the RIP property is satisfied.} OML is dependent on the length of unknown vectors $n$, the maximum sparsity level $k$ and the $k-$RIP constant $\delta$. Hence, we can define a function $\mu$,
\begin{equation}
\mu: \mathcal{N} \times \mathcal{K} \times (0,1) \rightarrow \mathbb{N}_0^+
\end{equation}
which maps the space of all possible values for $n$, $k$, and $\delta$, given by\footnote{We define $\mathbb{N}_0^+$ to be the set of non-negative integers.} $\mathcal{N} \subseteq \mathbb{N}_0^+$, $\mathcal{K}\subseteq \mathbb{N}_0^+$ and $(0,1)$, respectively, to the corresponding OML quantity.

Now, let us focus on the special case of matrices which can be \textit{arbitrarily} constructed. In such case, let $\mu$ be denoted by $\mu_a$ (`a' stands for Arbitrary matrix construction). We define $\mu_a$ to be the solution of the following optimization problem:
\begin{mini!}|l|[2]
{\substack{\boldsymbol{M_a}\in \mathbb{C}^{m_a \times n}}}
         {m_a} {}{P1:}
\addConstraint{ \small \boldsymbol{M_a} \in \mathcal{F}_{\delta}}{}
\end{mini!} 
where $\mathcal{F}_{\delta}$ is the feasible set, and it is defined as
\begin{align*}\small
\mathcal{F}_{\delta} \triangleq \{
\boldsymbol{M_a} \in \mathbb{C}^{m_a \times n} :&
(1 {-} \delta) \norm{\boldsymbol{x}}_2^2 \leq \norm{\boldsymbol{M_a}\boldsymbol{x}}_2^2 \leq (1 {+} \delta) \norm{\boldsymbol{x}}^2_2 , \\ &\forall \boldsymbol{x} \in \mathbb{C}^{n} : \norm{\boldsymbol{x}}_0 \leq k 
\}
\end{align*}
\begin{lemma}\label{lemma:nonDecDelta}
Let $n$ and $k$ be fixed. Then, $\delta_1 \geq \delta_2$ implies $\mu_a(n,k,\delta_1) \leq \mu_a(n,k,\delta_2)$.
\end{lemma}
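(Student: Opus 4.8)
The plan is to show that enlarging the allowed RIP constant can only relax the feasibility requirement, so the minimal number of rows cannot increase. First I would fix $n$ and $k$ and take any $\delta_1 \geq \delta_2$, both in $(0,1)$. The key observation is a monotonicity of the feasible sets: I claim $\mathcal{F}_{\delta_2} \subseteq \mathcal{F}_{\delta_1}$. To see this, take any $\boldsymbol{M} \in \mathcal{F}_{\delta_2}$ with $m$ rows, and any $\boldsymbol{x} \in \mathbb{C}^n$ with $\norm{\boldsymbol{x}}_0 \leq k$. By definition, $(1-\delta_2)\norm{\boldsymbol{x}}_2^2 \leq \norm{\boldsymbol{M}\boldsymbol{x}}_2^2 \leq (1+\delta_2)\norm{\boldsymbol{x}}_2^2$. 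Since $\delta_1 \geq \delta_2$ we have $1 - \delta_1 \leq 1 - \delta_2$ and $1 + \delta_1 \geq 1 + \delta_2$, so the chain of inequalities only widens: $(1-\delta_1)\norm{\boldsymbol{x}}_2^2 \leq (1-\delta_2)\norm{\boldsymbol{x}}_2^2 \leq \norm{\boldsymbol{M}\boldsymbol{x}}_2^2 \leq (1+\delta_2)\norm{\boldsymbol{x}}_2^2 \leq (1+\delta_1)\norm{\boldsymbol{x}}_2^2$. Hence $\boldsymbol{M} \in \mathcal{F}_{\delta_1}$, establishing the set inclusion.

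Next I would translate this inclusion into a statement about the optimal values of the two instances of problem $P1$. Let $\boldsymbol{M}^\star$ be an optimal solution of $P1$ at constant $\delta_2$, so it has $m_a = \mu_a(n,k,\delta_2)$ rows and lies in $\mathcal{F}_{\delta_2}$. By the inclusion just proved, $\boldsymbol{M}^\star \in \mathcal{F}_{\delta_1}$ as well, so $\boldsymbol{M}^\star$ is a feasible point for $P1$ at constant $\delta_1$ with $\mu_a(n,k,\delta_2)$ rows. Since $\mu_a(n,k,\delta_1)$ is the minimum achievable number of rows over $\mathcal{F}_{\delta_1}$, it can be no larger than the row count of any particular feasible matrix; therefore $\mu_a(n,k,\delta_1) \leq \mu_a(n,k,\delta_2)$, which is exactly the claim.

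There is essentially no hard part here — the result is a soft consequence of the definitions — but the one point that warrants care is making sure the argument is not vacuous: if $\mathcal{F}_{\delta_2}$ were empty there would be no $\boldsymbol{M}^\star$ to pick. This is not an issue, because for any $\delta \in (0,1)$ a suitably scaled partial DFT or i.i.d.\ Gaussian matrix with enough rows satisfies RIP of order $k$ with constant $\delta$ (indeed Theorem~\ref{thm:davenPortBnd} is stated for matrices that do satisfy RIP), so $\mathcal{F}_{\delta_2} \neq \emptyset$ and $\mu_a(n,k,\delta_2)$ is a well-defined finite integer. With that caveat dispatched, the set-inclusion argument above completes the proof.
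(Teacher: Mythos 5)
Your proof is correct and follows essentially the same route as the paper's own argument: the monotone inclusion $\mathcal{F}_{\delta_2} \subseteq \mathcal{F}_{\delta_1}$ combined with the fact that $P1$ is a minimization, so a feasible point for the smaller constant witnesses an upper bound on the optimum for the larger one. Your additional remark on non-emptiness of $\mathcal{F}_{\delta_2}$ is a reasonable extra precaution but does not change the substance.
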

\ifdefined\longversion
\begin{proof}
\else
\begin{proof*}
\fi
The proof directly follows by observing that $\delta_1 \geq \delta_2$ implies that $\mathcal{F}_{\delta_2} \subseteq \mathcal{F}_{\delta_1}$. Since the problem is a minimization problem, then $\mu_a(n,k,\delta_1) \leq \mu_a(n,k,\delta_2)$.
\ifdefined\longversion
\end{proof}
\else
\end{proof*}
\fi
\textbf{Kronecker Product Matrices:} The standard compressed sensing problem assumes that all elements of the sensing matrix are independently chosen.
On the contrary, in sparse channel estimation, we are restricted to a specific sensing matrix structure, as shown in Eq. (\ref{eq:sensingVect}).
The only free parameters in this sensing matrix are the tx-precoders $\boldsymbol{f_j}$ and the rx-combiners $\boldsymbol{w_i}$.
This limitation suggests that more measurements may be needed to achieve the same RIP constant, compared to matrices whose elements are independently selected.

At the heart of our results lies the relationship between the $k-$RIP constant of Kronecker product matrices and the $k-$RIP constants of the matrices that form them.
We formally state this relationship in the following lemma.
\begin{lemma}[RIP of Kronecker Products]\label{lemma:RIP_kronecker}
Let $\delta_a$ and $\delta_b$ be the $k-$RIP constants of the matrices $\boldsymbol{A}$ and $\boldsymbol{B}$, respectively. Then, the $k-$RIP constant of $\boldsymbol{A} \otimes \boldsymbol{B}$, denoted by $\delta$, is bounded as
\begin{equation}
\delta \geq \max\{\delta_a, \delta_b\}
\end{equation}
\end{lemma}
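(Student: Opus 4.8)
The plan is to exhibit, for each of $\boldsymbol{A}$ and $\boldsymbol{B}$ separately, a sparse vector on which the Kronecker product $\boldsymbol{A}\otimes\boldsymbol{B}$ stretches (or shrinks) norms by exactly the same factor that $\boldsymbol{A}$ (resp.\ $\boldsymbol{B}$) does on its own worst-case sparse vector, and then conclude that the RIP constant of the product is at least as large as $\max\{\delta_a,\delta_b\}$. The key algebraic fact is the mixed-product identity $(\boldsymbol{A}\otimes\boldsymbol{B})(\boldsymbol{u}\otimes\boldsymbol{v}) = (\boldsymbol{A}\boldsymbol{u})\otimes(\boldsymbol{B}\boldsymbol{v})$ together with the norm identity $\norm{\boldsymbol{u}\otimes\boldsymbol{v}}_2^2 = \norm{\boldsymbol{u}}_2^2\,\norm{\boldsymbol{v}}_2^2$, and the sparsity bookkeeping $\norm{\boldsymbol{u}\otimes\boldsymbol{v}}_0 = \norm{\boldsymbol{u}}_0\,\norm{\boldsymbol{v}}_0$.

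First I would recall that $\delta_a$, being the smallest isometry constant of order $k$ for $\boldsymbol{A}$, is attained (or approached) by some $k$-sparse unit vector $\boldsymbol{u}^\star$ with $\bigl|\,\norm{\boldsymbol{A}\boldsymbol{u}^\star}_2^2 - 1\,\bigr| = \delta_a$; concretely one of the bounds $\norm{\boldsymbol{A}\boldsymbol{u}^\star}_2^2 = 1+\delta_a$ or $= 1-\delta_a$ is tight. Next, pick any $1$-sparse unit vector $\boldsymbol{v}$ — for instance a standard basis vector $\boldsymbol{e}_1$ — which is trivially $k$-sparse and for which $\norm{\boldsymbol{B}\boldsymbol{v}}_2^2$ need not equal $1$, which is a wrinkle I address below. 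Set $\boldsymbol{x} = \boldsymbol{u}^\star \otimes \boldsymbol{v}$. Then $\norm{\boldsymbol{x}}_0 = \norm{\boldsymbol{u}^\star}_0 \cdot 1 \le k$, so $\boldsymbol{x}$ is admissible for the order-$k$ RIP of $\boldsymbol{A}\otimes\boldsymbol{B}$, and by the identities above $\norm{(\boldsymbol{A}\otimes\boldsymbol{B})\boldsymbol{x}}_2^2 = \norm{\boldsymbol{A}\boldsymbol{u}^\star}_2^2 \cdot \norm{\boldsymbol{B}\boldsymbol{v}}_2^2$ while $\norm{\boldsymbol{x}}_2^2 = \norm{\boldsymbol{B}\boldsymbol{v}}_2^2$. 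Dividing, the ratio $\norm{(\boldsymbol{A}\otimes\boldsymbol{B})\boldsymbol{x}}_2^2 / \norm{\boldsymbol{x}}_2^2$ equals exactly $\norm{\boldsymbol{A}\boldsymbol{u}^\star}_2^2 \in \{1+\delta_a,\,1-\delta_a\}$, so any isometry constant valid for $\boldsymbol{A}\otimes\boldsymbol{B}$ must be $\ge \delta_a$; hence $\delta \ge \delta_a$. By the symmetric argument with a $1$-sparse vector on the $\boldsymbol{A}$ side and the worst-case $k$-sparse vector on the $\boldsymbol{B}$ side, $\delta \ge \delta_b$, and combining gives $\delta \ge \max\{\delta_a,\delta_b\}$.

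The main obstacle is the one I flagged: for the RIP ratio one compares $\norm{\boldsymbol{G}\boldsymbol{x}}_2^2$ against $\norm{\boldsymbol{x}}_2^2$, and the factor $\norm{\boldsymbol{B}\boldsymbol{v}}_2^2$ contributed by the padding vector appears in both numerator and denominator, so it must be nonzero to divide it out — this is guaranteed because the paper has fixed attention to RIP matrices with $\delta_b \in (0,1)$, forcing $\norm{\boldsymbol{B}\boldsymbol{v}}_2^2 \ge 1-\delta_b > 0$ for every unit vector $\boldsymbol{v}$ with $\norm{\boldsymbol{v}}_0 \le k$. A secondary subtlety is whether the supremum defining $\delta_a$ is actually attained; since the set of $k$-sparse unit vectors is a finite union of unit spheres in coordinate subspaces, hence compact, and $\boldsymbol{u}\mapsto\norm{\boldsymbol{A}\boldsymbol{u}}_2^2$ is continuous, the extremum is attained, so no limiting argument is needed — but if one prefers, the same conclusion follows by taking $\boldsymbol{u}$ with $\bigl|\norm{\boldsymbol{A}\boldsymbol{u}}_2^2-1\bigr|\ge\delta_a-\varepsilon$ and letting $\varepsilon\to 0$. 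Everything else is the routine Kronecker-product bookkeeping recorded above.
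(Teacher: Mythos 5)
There is a genuine gap, and it sits exactly at the point the lemma is designed to handle. With $\boldsymbol{x}=\boldsymbol{u}^\star\otimes\boldsymbol{v}$ and $\boldsymbol{v}$ a $1$-sparse \emph{unit} vector, the denominator of the RIP ratio is $\norm{\boldsymbol{x}}_2^2=\norm{\boldsymbol{u}^\star}_2^2\,\norm{\boldsymbol{v}}_2^2=1$, not $\norm{\boldsymbol{B}\boldsymbol{v}}_2^2$ as you claim; the factor $\norm{\boldsymbol{B}\boldsymbol{v}}_2^2$ appears only in the numerator, so nothing cancels. The ratio is therefore $\norm{\boldsymbol{A}\boldsymbol{u}^\star}_2^2\,\norm{\boldsymbol{b_1}}_2^2$, where $\boldsymbol{b_1}$ is the column of $\boldsymbol{B}$ selected by $\boldsymbol{v}$, and $\norm{\boldsymbol{b_1}}_2$ need not equal $1$: the paper explicitly does \emph{not} assume normalized columns (that is the stated difference from \cite{jokar2009sparse}), only that $\norm{\boldsymbol{b_1}}_2^2\in[1-\delta_b,1+\delta_b]$. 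Your single test vector can then be completely uninformative — e.g.\ if $\norm{\boldsymbol{A}\boldsymbol{u}^\star}_2^2=1+\delta_a$ while $\norm{\boldsymbol{b_1}}_2^2=1/(1+\delta_a)$, the ratio equals $1$ and gives no lower bound on $\delta$ whatsoever. Your ``main obstacle'' paragraph ($\norm{\boldsymbol{B}\boldsymbol{v}}_2^2>0$ suffices to divide it out) addresses a non-issue and leaves the real one untouched: the extra column-norm factor multiplies, rather than cancels from, the deviation you are trying to exhibit.

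The repair is essentially the paper's argument. Keep the whole family of test vectors $\boldsymbol{u}\otimes\boldsymbol{e_1}$ with $\boldsymbol{u}$ ranging over $k$-sparse vectors, so that the $k$-RIP of $\boldsymbol{A}\otimes\boldsymbol{B}$ forces $(1-\delta)\norm{\boldsymbol{u}}_2^2\le\norm{\boldsymbol{b_1}}_2^2\,\norm{\boldsymbol{A}\boldsymbol{u}}_2^2\le(1+\delta)\norm{\boldsymbol{u}}_2^2$ for all such $\boldsymbol{u}$; then use that, under the paper's standing convention that the tightest RIP bounds are symmetric about $1$, both extremes $1\pm\delta_a$ are approached by $\norm{\boldsymbol{A}\boldsymbol{u}}_2^2/\norm{\boldsymbol{u}}_2^2$, and split into cases: if $\norm{\boldsymbol{b_1}}_2^2\le 1$ compare lower bounds to get $1-\delta\le\norm{\boldsymbol{b_1}}_2^2(1-\delta_a)\le 1-\delta_a$, otherwise compare upper bounds to get $1+\delta\ge\norm{\boldsymbol{b_1}}_2^2(1+\delta_a)\ge 1+\delta_a$; either way $\delta\ge\delta_a$, and symmetrically $\delta\ge\delta_b$. (This case split is exactly the paper's Eqs.~(\ref{eq:B1})--(\ref{eq:B2}); the paper proves the $\delta\ge\delta_b$ direction with block vectors $(\boldsymbol{b}^T,\boldsymbol{0}^T,\dots)^T$ and then transfers to $\delta\ge\delta_a$ via a permutation identity $\boldsymbol{B}\otimes\boldsymbol{A}=\boldsymbol{P_{\rho}}(\boldsymbol{A}\otimes\boldsymbol{B})\boldsymbol{P_c}$, whereas your symmetric use of $\boldsymbol{u}\otimes\boldsymbol{e_1}$ and $\boldsymbol{e_1}\otimes\boldsymbol{b}$ would avoid that step — a fine structural simplification, but only once the cancellation error is replaced by the two-sided case analysis.) Your other ingredients — the mixed-product identity, the sparsity bookkeeping, and the compactness argument for attainment of $\delta_a$ — are correct and unobjectionable.
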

A similar result to Lemma \ref{lemma:RIP_kronecker} was derived in \cite{jokar2009sparse}, but under the stronger assumption of matrices with normalized columns.
Our more general result implies that even if the normalized columns assumption is loosened, we still cannot obtain a matrix, through a Kronecker Product, which satisfies the RIP property with a constant smaller than the maximum of the $k-$RIP constants of the matrices that form it.
\ifdefined\longversion
The proof of Lemma \ref{lemma:RIP_kronecker} is provided in Appendix \ref{append:RIP_KP}.
\else
To prove Lemma \ref{lemma:RIP_kronecker}, we define two matrices:
$\boldsymbol{C} = \boldsymbol{A} \otimes \boldsymbol{B}$ and $\boldsymbol{C'} = \boldsymbol{B} \otimes \boldsymbol{A}$, whose $k-$RIP constants are $\delta_c$ and $\delta_{c'}$, respectively. We will then show that: (1) $\delta_c \geq \delta_b$, (2) $\delta_{c'} \geq \delta_a$, and that (3) $\delta_c = \delta_{c'}$, by which Lemma \ref{lemma:RIP_kronecker} follows.
A proof outline is provided next, while the detailed proof is provided in \cite{technicalReport}.
To show part (1), let the number of columns of $\boldsymbol{A}$ and $\boldsymbol{B}$ be $n_a$ and $n_b$, respectively. And define $\mathcal{X}_c$ and $\mathcal{X}_b$ to be the sets of vectors with sparsity levels $\leq k$ and whose lengths are $n_a n_b$ and $n_b$, respectively.
Since $\boldsymbol{C}$ and $\boldsymbol{B}$ satisfy $k-$RIP with $\delta_c$ and $\delta_b$. Then, $\norm{\boldsymbol{C} \boldsymbol{x_c}}^2$ is bounded between $(1 \pm \delta_c) \norm{\boldsymbol{x_c}}^2$ for all $\boldsymbol{x_c} \in \mathcal{X}_c$.
Similarly, $\norm{\boldsymbol{B} \boldsymbol{b}}^2$ is bounded between $(1 \pm \delta_b) \norm{\boldsymbol{b}}^2$.
View $\boldsymbol{x_c}$, whose length is $n_a n_b$, as $n_a$ blocks of length $n_b$ each, and focus on a strict subset of $\mathcal{X}_c$, call it $\mathcal{X}_c^{(b)}$, with vectors defined as
\begingroup 
\setlength\arraycolsep{2pt}
$\footnotesize \boldsymbol{x_c^{(b)}} = \small \begin{pmatrix} \boldsymbol{b}^T & \boldsymbol{0}^T & \dots & \boldsymbol{0}^T \end{pmatrix}^T$ for all $\boldsymbol{b} \in \mathcal{X}_b$.
\endgroup
Now examine that $||\boldsymbol{C}\boldsymbol{x_c^{(b)}}||^2 = \norm{\boldsymbol{a_1}}^2 \norm{\boldsymbol{B} \boldsymbol{b}}^2$, where $\boldsymbol{a_1}$ is the first columns of $\boldsymbol{A}$, which leads to these two bounds:
\par\nobreak
\vspace{-9pt}
{\footnotesize
\begin{align}
\label{eq:B1}
&(1-\delta_c)\norm{\boldsymbol{b}}^2 \leq 
\norm{\boldsymbol{a_1}}^2 (1-\delta_b)\norm{\boldsymbol{b}}^2 \leq 
\norm{\boldsymbol{a_1}}^2 \norm{\boldsymbol{B} \boldsymbol{b}}^2 \\
\label{eq:B2}
&\norm{\boldsymbol{a_1}}^2 \norm{\boldsymbol{B} \boldsymbol{b}}^2 \leq \norm{\boldsymbol{a_1}}^2 (1+\delta_b)\norm{\boldsymbol{b}}^2 \leq (1+\delta_c)\norm{\boldsymbol{b}}^2
\end{align}
}%
If $\norm{\boldsymbol{a_1}}\leq 1$, we can use Eq. (\ref{eq:B1}) to conclude that $\delta_c \geq \delta_b$, otherwise, use Eq. (\ref{eq:B2}) to arrive at the same conclusion. This concludes part (1), and by analogy part (2) follows.
Finally, since there exists Permutation matrices $\boldsymbol{P_{\rho}}$ and $\boldsymbol{P_c}$, such that $\boldsymbol{C'} = \boldsymbol{P_{\rho}} \boldsymbol{C} \boldsymbol{P_c}$. This leads to $\norm{\boldsymbol{C'}\boldsymbol{x_c'}} = \norm{\boldsymbol{C}\boldsymbol{x_c}}$, where $\boldsymbol{x_c'} = \boldsymbol{P_c} \boldsymbol{x_c}$ and $\boldsymbol{x_c} \in \mathcal{X}_c$ if and only if $\boldsymbol{x_c'} \in \mathcal{X}_c$. Hence $\delta_c = \delta_{c'}$.
\fi

\textbf{A Generalized Bound:}
Recall Eq. (\ref{eq:sensingVect}). We will rewrite $\boldsymbol{G_v}$, for brevity, in terms of $\boldsymbol{M_t}$ and $\boldsymbol{M_r}$, where 
\ifdefined\longversion
\begin{align}\small
\boldsymbol{M_t} & \triangleq \left( \boldsymbol{F}^H \boldsymbol{U_t} \right)^{\ast} \in \mathbb{C}^{m_t \times n_t} \\
\boldsymbol{M_r} &\triangleq \boldsymbol{W}^H \boldsymbol{U_r} \in \mathbb{C}^{m_r \times n_r} 
\end{align}
\else
\begin{equation*}\small
\boldsymbol{M_t} \triangleq \left( \boldsymbol{F}^H \boldsymbol{U_t} \right)^{\ast} \in \mathbb{C}^{m_t \times n_t},
\quad
\boldsymbol{M_r} \triangleq \boldsymbol{W}^H \boldsymbol{U_r} \in \mathbb{C}^{m_r \times n_r}
\end{equation*}
\fi
Thus, we have $\boldsymbol{G_v} {=} \boldsymbol{M_t} {\otimes} \boldsymbol{M_r}$, and $m {=} m_t m_r$ is the number of rows of $\boldsymbol{G_v}$.
Now, suppose that $\boldsymbol{G_v}$ satisfies $k-$RIP with constant $\delta {\in} (0,1)$. Then, both $\boldsymbol{M_t}$ and $\boldsymbol{M_r}$ must satisfy the $k-$RIP with constants $\delta_t {\in} (0,1)$ and $\delta_r {\in} (0,1)$, respectively.
To show that this is true, assume, without loss of generality (w.l.o.g.), that there does not exist $\delta_t \in (0,1)$ such that $\boldsymbol{M_t}$ satisfies $k-$RIP. Then, there exists a vector $\boldsymbol{v}$ with $\norm{\boldsymbol{v}}_0 {\leq} k$ such that $\boldsymbol{M_t}\boldsymbol{v} = \boldsymbol{0}$, which implies the  existence of at least $k$ dependent columns of $\boldsymbol{M_t}$, call them $\boldsymbol{a_{t1}}, \boldsymbol{a_{t2}}, \dots, \boldsymbol{a_{tk}}$. In turn, there exists at least $k$ dependent columns in $\boldsymbol{G_v}$ (Let $\boldsymbol{a_{r1}}$ be a column in $\boldsymbol{M_r}$, then the columns $\boldsymbol{a_{t1}} {\otimes} \boldsymbol{a_{r1}}, \boldsymbol{a_{t2}} {\otimes} \boldsymbol{a_{r1}}, \dots, \boldsymbol{a_{tk}} {\otimes} \boldsymbol{a_{r1}}$ are dependent).
Hence, $\nexists \delta \in (0,1)$ such that $\boldsymbol{G_v}$ satisfies $k-$RIP with a constant $\delta$. Thus, we arrive at a contradiction.
Further, by Lemma \ref{lemma:RIP_kronecker}, we have that $\delta \geq \max \{\delta_t, \delta_r \}$.

Since $\boldsymbol{M_t}$ and $\boldsymbol{M_r}$ can be arbitrarily constructed, then we can lower bound $m_t$ and $m_r$ by their OML values as follows
\begin{align}
m_t &\geq  \mu_a(n_t, k, \delta_t)
\stackrel{(i)}{\geq} \mu_a(n_t, k, \delta) \\
m_t &\geq \mu_a(n_r, k, \delta_r)
\stackrel{(ii)}{\geq} \mu_a(n_r, k, \delta)
\end{align}
where inequalities $(i)$ and $(ii)$ follow from Lemma \ref{lemma:nonDecDelta}. Thus, it follows that the number of rows of $\boldsymbol{G_v}$, $m$, is bounded as
\begin{equation}\label{eq:generalLB}
m \geq \mu_a(n_t, k, \delta) \times  \mu_a(n_r, k, \delta).
\end{equation}
Recall that $\mu_a(\cdot)$ is the value that solves problem P1.

\begin{remark}
\textit{The implication of Inequality (\ref{eq:generalLB}) is that the number of measurements needed for estimating a sparse MIMO channel, $\boldsymbol{Q}$, is at least equal to (but possibly higher) than the product of the number of measurements needed to solve the following two sub-problems:}
\ifdefined\longversion
\begin{itemize}
\item \textit{The first is a Single-Input Multiple-Output (SIMO), $1 \times n_r$ channel, with $\boldsymbol{M_t}^{\ast}$ as sensing matrix.}
\item \textit{The second is a Multiple-Input Single-Output (MISO), $n_t \times 1$ channel, with $\boldsymbol{M_r}$ as sensing matrix,}
\end{itemize}
\else
\textit{The first is a SIMO, $1 \times n_r$ channel, with $\boldsymbol{M_t}^{\ast}$ as sensing matrix. The second is a MISO, $n_t \times 1$ channel, with $\boldsymbol{M_r}$ as sensing matrix,}
\fi
\textit{where the sparsity level of both channels is $\leq k$.
These two sub-problems are special cases of the original problem, whose measurement equations are shown in
\ifdefined\longversion
Eq. (\ref{eqn:SIMO_measurements}) and Eq. (\ref{eqn:MISO_measurements}), respectively.
\else
Eq. (\ref{eqn:SIMO_MISO_measurements}).
\fi
The only difference is the conjugation of $\boldsymbol{M_t}$.}
\end{remark}

The bound we derive in Eq. (\ref{eq:generalLB}) highlights the dependence on the channel dimensions $n_t$ and $n_r$, the maximum sparsity level $k$ and a measure, $\delta$, of how much information the measurements preserve about the channel. This bound, however, is not explicit, but we can use Theorem \ref{thm:davenPortBnd} to derive a more concrete lower bound for $\mu_a(\cdot)$. This leads to our main result:

\begin{theorem}[Main Theorem]
\label{thm:main_specific}
Fix $\delta \in (0,1)$. If $\boldsymbol{G}_v$ in Eq. (\ref{eq:sensingVect}) satisfies RIP with order $2k$ and constant $\delta$, then the number of measurements $m$ is asymptotically bounded as:
\begin{tcolorbox}[ams equation] \label{eq:mybound}
m = \Omega\left(k^2 \log\left(\frac{n_t}{k}\right) \log\left(\frac{n_r}{k}\right)\right)
\end{tcolorbox}
\end{theorem}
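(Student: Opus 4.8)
The plan is to combine the general product lower bound from Inequality~(\ref{eq:generalLB}) with the explicit Davenport--Wakin bound of Theorem~\ref{thm:davenPortBnd}, applied at sparsity order $2k$. Since $\boldsymbol{G}_v = \boldsymbol{M_t}\otimes\boldsymbol{M_r}$ satisfies RIP of order $2k$ with constant $\delta\in(0,1)$, the argument preceding Eq.~(\ref{eq:generalLB}) (specialized to order $2k$ rather than $k$) shows that each block $\boldsymbol{M_t}$ and $\boldsymbol{M_r}$ must satisfy RIP of order $2k$ with some constants $\delta_t,\delta_r\in(0,1)$, and that $\delta\geq\max\{\delta_t,\delta_r\}$ by Lemma~\ref{lemma:RIP_kronecker}. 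Hence $m = m_t m_r \geq \mu_a(n_t,2k,\delta)\,\mu_a(n_r,2k,\delta)$, where I have used Lemma~\ref{lemma:nonDecDelta} to replace $\delta_t,\delta_r$ by the larger $\delta$ inside $\mu_a$.

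Next I would invoke Theorem~\ref{thm:davenPortBnd} to lower bound each factor: $\mu_a(n_t,2k,\delta)\geq c_\delta\,(2k)\log(n_t/(2k))$ and similarly $\mu_a(n_r,2k,\delta)\geq c_\delta\,(2k)\log(n_r/(2k))$, since $\mu_a$ is by definition the minimal row count of a matrix meeting RIP of order $2k$ with constant $\delta$, and every such matrix obeys the bound in Eq.~(\ref{eq:CS_boundRIP}). Multiplying the two inequalities gives
\begin{equation*}
m \;\geq\; 4\,c_\delta^2\,k^2 \,\log\!\left(\frac{n_t}{2k}\right)\log\!\left(\frac{n_r}{2k}\right).
\end{equation*}
It then remains to convert $\log(n_t/(2k))$ and $\log(n_r/(2k))$ into $\log(n_t/k)$ and $\log(n_r/k)$ up to constants. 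This is where the standing assumption $n_t,n_r\geq k^{1+\epsilon}$ from the system model enters: it guarantees $n_t/k\geq k^{\epsilon}\to\infty$, so $\log(n_t/(2k)) = \log(n_t/k)-\log 2 \geq \tfrac12\log(n_t/k)$ for all sufficiently large $k$ (and likewise for $n_r$), which is exactly what is needed for the $\Omega(\cdot)$ claim. Since $\delta$ is fixed, $c_\delta$ is a positive constant, and the bound becomes $m=\Omega\bigl(k^2\log(n_t/k)\log(n_r/k)\bigr)$.

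The main obstacle — really the only non-bookkeeping point — is justifying that the blocks inherit RIP \emph{of order $2k$} (not merely order $k$) and that the product bound~(\ref{eq:generalLB}) holds at this order; the excerpt states Lemma~\ref{lemma:RIP_kronecker} and the surrounding derivation for a generic order $k$, so I would simply note that the same arguments apply verbatim with $k$ replaced by $2k$ throughout (the contradiction argument via $k$ dependent columns becomes a $2k$-dependent-columns argument, and Lemma~\ref{lemma:nonDecDelta} is order-agnostic). A secondary subtlety is that Theorem~\ref{thm:davenPortBnd} requires $n/k$ large enough for the logarithm to be positive and the asymptotic regime to be meaningful; again the $n_t,n_r\geq k^{1+\epsilon}$ hypothesis supplies this. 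Everything else is the routine multiplication and constant-chasing sketched above.
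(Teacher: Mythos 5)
Your proposal is correct and follows essentially the same route as the paper: apply the product bound of Eq.~(\ref{eq:generalLB}) at order $2k$, lower-bound each $\mu_a(\cdot,2k,\delta)$ factor via Theorem~\ref{thm:davenPortBnd}, and absorb the $\log 2$ shift using the standing assumption that $n_t/k$ and $n_r/k$ grow, exactly as in the paper's constant-chasing step with $c=0.5$. Your explicit remark that the derivation of Eq.~(\ref{eq:generalLB}) must be rerun with $2k$ in place of $k$ is a point the paper leaves implicit, but it is not a different argument.
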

\ifdefined\longversion
\begin{proof}
\else
\begin{proof*}
\fi
Since $\mu_a(n_t, 2k, \delta)$ and $\mu_a(n_r, 2k, \delta)$ are obtained by solving the problem $P1$ (with their respective $n_t$, $n_r$ and $\delta$ values), then there exists matrices $\boldsymbol{X_t}$ and $\boldsymbol{X_r}$, with dimensions $\mu_a(n_t, 2k, \delta) \times n_t$ and $\mu_a(n_r, 2k, \delta) \times n_r$ which satisfy $2k-$RIP with constant $\delta$. Thus, it follows by Theorem \ref{thm:davenPortBnd} that:
\ifdefined\longversion
\begin{align}
\mu_a(n_t, 2k, \delta) & \geq c_{\delta} 2k \log\left( \frac{n_t}{2k} \right)\\
\mu_a(n_r, 2k, \delta) & \geq c_{\delta} 2k \log\left( \frac{n_r}{2k} \right)
\end{align}
\else
\small{
\begin{equation*}
\mu_a(n_t, 2k, \delta)  \geq c_{\delta} 2k \log\left( \frac{n_t}{2k} \right), \; \;
\mu_a(n_r, 2k, \delta)  \geq c_{\delta} 2k \log\left( \frac{n_r}{2k} \right)
\end{equation*}}%
\normalsize
\fi
Therefore, by Eq. (\ref{eq:generalLB}), the following follows
\begin{align}
m = m_t m_r &\geq 4 c_{\delta}^2 k^2 \log\left( \frac{n_t}{2k} \right) \log\left( \frac{n_r}{2k} \right)
\end{align}
Finally, let $c = 0.5$ and recall that the ratio $\frac{n_t}{k}$ increases (by assumption). Then, there exists $n_{t0} \in \mathbb{N}$ such that $\log(\frac{n_t}{2k}) \geq c \log(\frac{n_t}{k})$ for all $n_t \geq n_{t0}$.
Similarly, there exists $n_{r0} \in \mathbb{N}$ such that $\log(\frac{n_r}{2k}) \geq c \log(\frac{n_r}{k})$ for all $n_r \geq n_{r0}$.
Then, it follows that
$m \geq 4 c^2 c_{\delta}^2 k^2 \log\left(\frac{n_t}{k}\right) \log\left(\frac{n_r}{k}\right)$ where $4 c^2 c_{\delta}^2 = c_{\delta}^2$ is a constant, from which Eq. (\ref{eq:mybound}) follows.
\ifdefined\longversion
\end{proof}
\else
\end{proof*}
\fi

\subsection{Tightness of the Measurement Bound}
To argue that the measurement lower bound in Theorem \ref{thm:main_specific} is tight, we will show that there exists a solution, based on \cite{shabara2019source}, which yields sensing matrices that satisfy $2k-$RIP with constants $\in (0,1)$ and with $m \in \Theta \left( k^2 \log \left(\frac{n_t}{k}\right) \log \left(\frac{n_r}{k}\right)\right)$.
We briefly discuss the measurement framework of \cite{shabara2019source} next.

In \cite{shabara2019source}, a source-coding-based framework for the sparse MIMO channel estimation problem is developed.
This solution proposes a method for obtaining a small number of measurements that are sufficient to estimate the channel.
Such measurements are designed based on two carefully chosen binary linear source codes, $C_t$ and $C_r$.
These codes dictate the design of tx-precoders (using $C_t$) and rx-combiners (using $C_r$) and produce real-valued measurement (sensing) matrices, namely, $\boldsymbol{H_t}$ (of size $m_t {\times} n_t$) and $\boldsymbol{H_r}$ (of size $m_r {\times} n_r$), respectively.
The matrix $\boldsymbol{H_t}$ can estimate $k-$sparse MISO channel vectors (i.e., produces unique measurements), while $\boldsymbol{H_r}$ can estimate $k-$sparse SIMO channels.
Hence, the spark of both matrices is greater than $2k$ (by Theorem \ref{thm:spark}).
Measurements are then obtained using all combinations of $m_t$ tx-precoders and $m_r$ rx-combiners, and can be arranged as $\boldsymbol{y_v} = \boldsymbol{H_v} \boldsymbol{q^a_v} + \boldsymbol{n_v}$ where $\boldsymbol{H_v} = \boldsymbol{H_t} \otimes \boldsymbol{H_r}$.
\ifdefined\longversion
By Lemma \ref{lemma:sparkOfTheKroneckerProduct} (in Appendix \ref{append:spark_kron}), we have that $\spark(\boldsymbol{H_v}) > 2k$.
\else
Then, it follows that $\spark(\boldsymbol{H_v}) > 2k$. This is shown in detail in our technical report \cite{technicalReport}.
\fi
Hence, either $\boldsymbol{H_v}$ or a scaled version of it satisfies $2k-$RIP with a constant $\delta_h \in (0,1)$.
This measurement framework is shown to produce a number of measurements, $m$, that is lower bounded as:
\begin{equation} \label{eq:m_underbar} \footnotesize
m \geq \underbar{m} \triangleq \underbrace{\ceil*{\log_2\left(\sum_{i = 0}^{k} {n_r \choose i}\right)}}_{\leq m_t} \underbrace{\ceil*{\log_2\left(\sum_{i = 0}^{k} {n_t \choose i}\right)}}_{\leq m_r}.
\end{equation}
This lower bound is achievable with equality for specific examples as shown in \cite{shabara2019source}.
However, it is not immediately clear how this bound compares to our bound in Eq. (\ref{eq:mybound}).
The following lemma sheds more light on this issue:
\begin{lemma}
The asymptotic behavior of $\underbar{m}$, defined in Eq. (\ref{eq:m_underbar}) follows:
$\underbar{m}  = \Theta \left( k^2 \log \left(\frac{n_t}{k}\right) \log \left(\frac{n_r}{k}\right)\right)$.
\label{lemma:BSC_lb}
\end{lemma}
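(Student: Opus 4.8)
The plan is to exploit the fact that $\underbar{m}$ is a \emph{product} of two quantities of identical form, so it suffices to analyze a single factor $g(n,k) \triangleq \ceil*{\log_2\left(\sum_{i=0}^{k}\binom{n}{i}\right)}$ and to show $g(n,k) = \Theta\!\left(k\log\frac{n}{k}\right)$ in the regime $n/k \to \infty$ (which is guaranteed by the standing assumption $n \ge k^{1+\epsilon}$). Once this is established, writing $\underbar{m} = g(n_r,k)\,g(n_t,k)$ and multiplying the two $\Theta$-estimates gives the claim, since for nonnegative quantities $\Theta(a)\cdot\Theta(b) = \Theta(ab)$, and here each factor is $\ge 1$.

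For the lower bound on $g(n,k)$ I would use the elementary inequality $\sum_{i=0}^{k}\binom{n}{i} \ge \binom{n}{k} \ge \left(\frac{n}{k}\right)^{k}$, which gives $\log_2\!\left(\sum_{i=0}^{k}\binom{n}{i}\right) \ge k\log_2\frac{n}{k}$ and hence $g(n,k) \ge k\log_2\frac{n}{k}$. For the matching upper bound I would invoke the standard partial-sum estimate $\sum_{i=0}^{k}\binom{n}{i} \le \left(\frac{en}{k}\right)^{k}$, valid for $1\le k\le n$ (it follows from $\left(\frac{k}{n}\right)^{k}\sum_{i=0}^{k}\binom{n}{i} \le \sum_{i=0}^{k}\binom{n}{i}\left(\frac{k}{n}\right)^{i} \le \left(1+\frac{k}{n}\right)^{n} \le e^{k}$). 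This yields $\log_2\!\left(\sum_{i=0}^{k}\binom{n}{i}\right) \le k\log_2\frac{n}{k} + k\log_2 e$, and adding $1$ for the ceiling, $g(n,k) \le k\log_2\frac{n}{k} + k\log_2 e + 1$. Because $n/k\to\infty$, eventually $\log_2\frac{n}{k}\ge 1+\log_2 e$, and since $k\ge 1$ we get $k\log_2 e + 1 \le k(\log_2 e + 1) \le k\log_2\frac{n}{k}$, so $g(n,k) \le 2k\log_2\frac{n}{k}$ for all sufficiently large $n$. Combining the two bounds gives $g(n,k) = \Theta\!\left(k\log\frac{n}{k}\right)$.

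Applying this with $n = n_t$ and $n = n_r$ separately, and substituting into $\underbar{m} = g(n_r,k)\,g(n_t,k)$, produces $\underbar{m} = \Theta\!\left(k\log\frac{n_r}{k}\right)\cdot\Theta\!\left(k\log\frac{n_t}{k}\right) = \Theta\!\left(k^{2}\log\frac{n_t}{k}\log\frac{n_r}{k}\right)$, which is the assertion of the lemma.

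The only point requiring care—the ``hard part,'' though it is modest—is the asymptotic bookkeeping in the upper bound: the ceiling contributes an additive $1$ and the $\left(\frac{en}{k}\right)^{k}$ estimate contributes an additive $k\log_2 e$ beyond the clean term $k\log_2\frac{n}{k}$, and both lower-order terms must be absorbed into a constant multiple of $k\log_2\frac{n}{k}$. This absorption is precisely where the hypothesis $n \ge k^{1+\epsilon}$ (hence $n/k \ge k^{\epsilon}$ and $\log_2\frac{n}{k}\to\infty$) is essential; if instead $n = \Theta(k)$ the lower-order terms would dominate and $g(n,k)$ would fail to be $\Theta\!\left(k\log\frac{n}{k}\right)$. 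Everything else is a direct substitution into the product form of $\underbar{m}$.
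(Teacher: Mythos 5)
Your proof is correct and follows essentially the same route as the paper's: sandwich $\log\left(\sum_{i=0}^{k}{n \choose i}\right)$ between $k\log\frac{n}{k}$ and $k\log\frac{n}{k}$ plus lower-order terms via the standard bounds $\left(\frac{n}{k}\right)^k \leq {n \choose k} \leq \left(\frac{ne}{k}\right)^k$, then multiply the two identical factors for $n_t$ and $n_r$. The only differences are minor: you bound the partial sum directly by $\left(\frac{en}{k}\right)^k$ rather than by $(k+1){n \choose k}$ as the paper does (avoiding the $\log(k+1)$ term and the $k<\frac{n+1}{2}$ caveat), and you make explicit the absorption of the additive $k\log_2 e + 1$ into a constant multiple of $k\log\frac{n}{k}$ using $n \geq k^{1+\epsilon}$, a step the paper leaves implicit.
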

This is the same asymptotic behavior as the lower bound in Theorem \ref{thm:main_specific}.
\ifdefined\longversion
The proof is provided in Appendix \ref{append:lemma_BSC_lb}.
\else
To prove this lemma, we use the following bound on ${n \choose k}$ \cite{cormen2009introduction}: $\left( \frac{n}{k}\right)^k \leq {n \choose k} \leq \left( \frac{ne}{k}\right)^k$, and observe that for $k < \frac{n+1}{2}$, we have ${n \choose k} \leq \sum_{i = 0}^{k} {n_r \choose i} \leq (k+1) {n \choose k}$.
This allows us to establish upper and lower bounds for $\sum_{i = 0}^{k} {n \choose i}$ using the aforementioned bounds for ${n \choose k}$, which leads to showing that $\log\left(\sum_{i = 0}^{k} {n \choose i}\right) = \Theta \left( k \log \left(\frac{n}{k}\right) \right)$.
Note that $\ceil*{\cdot}$ does not change the asymptotic behavior of its argument.
Therefore, we are able to show that $\underbar{m}  = \Theta \left( k^2 \log \left(\frac{n_t}{k}\right) \log \left(\frac{n_r}{k}\right)\right)$.
A rigorous proof of Lemma \ref{lemma:BSC_lb} is provided in our technical report\cite{technicalReport}.
\fi
Next, we will examine a specific solution based on the family of BCH codes, which results in a number of measurements upper bounded as $ m = O \left( k^2 \log \left(\frac{n_t}{k}\right) \log \left(\frac{n_r}{k}\right)\right)$.

\begin{example}[BCH codes]
Although BCH codes are natively error-correcting codes, they can be used as syndrome-source-codes, as well\footnote{A linear block error-correcting code (LBC) can be utilized as a syndrome source code which can uniquely compress sequences that contain a number of 1's less than or equal to the number of correctable errors of the used code \cite{Ancheta1976Syndrome}. The parity check matrix of the LBC code is used as the generator matrix for the source code. Hence, the number of parity bits of the LBC code is the length of the compressed sequences for the corresponding source code.}.
By the properties of BCH codes, we have that for any positive integers $t \geq 3$ and $k < 2^{t-1}$, there exists a binary BCH code with: i) block length $n = 2^t-1$, ii) minimum distance $d_{\text{min}} \geq 2k+1$ (hence, it can correct up to $k$ errors), and iii) a number of parity check bits $m \leq t k = k \log_2\left(n+1\right)$.
Using BCH codes to design $C_t$ and $C_r$, we obtain a solution whose number of measurements is upper bounded according to the following lemma:
\begin{lemma} \label{lemma:measurement_upperbound}
The number of measurements achievable using BCH codes in the framework of \cite{shabara2019source} is asymptotically bounded as $m = O \left( k^2 \log \left(\frac{n_t}{k}\right) \log \left(\frac{n_r}{k}\right) \right)$.
\end{lemma}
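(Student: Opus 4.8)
The plan is to combine the structural bound $m = m_t m_r$ from the framework of \cite{shabara2019source} with the explicit parameter guarantees of BCH codes, and then show that the resulting product is $O\!\left(k^2 \log(n_t/k)\log(n_r/k)\right)$. First I would recall that in the framework, $C_t$ is used to design the tx-precoders, producing the matrix $\boldsymbol{H_t}$ with $m_t$ rows, and $C_r$ produces $\boldsymbol{H_r}$ with $m_r$ rows, and the total number of measurements is $m = m_t m_r$. The key input is that the number of parity-check bits of a code (which equals the number of rows of the corresponding syndrome-source sensing matrix) is the quantity to bound. So $m_t$ equals the number of parity bits of the BCH code chosen for the TX dimension $n_t$, and $m_r$ likewise for $n_r$.

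Next I would invoke the stated properties of BCH codes: for any integers $t \geq 3$ and $k < 2^{t-1}$, there is a binary BCH code of block length $n = 2^t - 1$, minimum distance $d_{\min} \geq 2k+1$, and number of parity bits $m \leq tk = k\log_2(n+1)$. To apply this to the TX side I would pick $t_t = \lceil \log_2(n_t+1)\rceil$ so that the code's block length $2^{t_t}-1 \geq n_t$ (then pad or shorten so the effective length matches $n_t$, which only decreases the number of parity bits), giving a code that corrects $k$ errors—hence $d_{\min} \geq 2k+1$, hence spark $> 2k$ of its parity-check matrix by Theorem~\ref{thm:spark}—with $m_t \leq t_t k = k\lceil\log_2(n_t+1)\rceil$. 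The condition $k < 2^{t_t - 1}$ is satisfied for $n_t$ large enough relative to $k$, which is consistent with the standing assumption $n_t \geq k^{1+\epsilon}$. The same construction on the RX side gives $m_r \leq k\lceil \log_2(n_r+1)\rceil$.

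Then the bound follows by multiplication: $m = m_t m_r \leq k^2 \lceil\log_2(n_t+1)\rceil\,\lceil\log_2(n_r+1)\rceil$. It remains to convert $\log_2(n_t+1)$ into $\log(n_t/k)$ up to constants in the asymptotic regime. Since $n_t \geq k^{1+\epsilon}$, we have $\log(n_t/k) = \log n_t - \log k \geq \log n_t - \tfrac{1}{1+\epsilon}\log n_t = \tfrac{\epsilon}{1+\epsilon}\log n_t$, so $\log n_t = O(\log(n_t/k))$, and the ceilings and the shift by $1$ and the change of logarithm base all only affect constants. Combining, $m = O\!\left(k^2 \log(n_t/k)\log(n_r/k)\right)$, as claimed. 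The main obstacle—really the only delicate point—is the bookkeeping to guarantee a valid BCH code exists for the given $(n_t,k)$: BCH block lengths come in the discrete set $\{2^t-1\}$, so one must argue that rounding up to the next such length (and shortening the code back down) preserves both the error-correction guarantee and the parity-bit count up to the constants absorbed in the asymptotic notation, using the assumption $n_i \geq k^{1+\epsilon}$ to ensure $k < 2^{t-1}$ holds.
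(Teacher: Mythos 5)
Your proposal is correct and follows essentially the same route as the paper: round $n_t$ up to the nearest BCH block length of the form $2^t-1$, shorten the code back to length $n_t$ (which leaves the parity-bit count unchanged, so $m_t \leq k\log_2(n_t'+1)$ still holds), and then use the assumption $n_i \geq k^{1+\epsilon}$ to absorb $\log n_i$ into $O\!\left(\log(n_i/k)\right)$ before multiplying $m_t m_r$. The only nitpick is your phrase that shortening ``only decreases the number of parity bits''---it in fact leaves them intact---but this does not affect the upper bound.
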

\ifdefined\longversion
The proof depends on constructing syndrome source codes with arbitrary block lengths, and is provided in Appendix \ref{append:lemma_proof_upperBound}.
\else
\begin{proof*}[Proof Sketch]
For arbitrary values of $n_t {\geq} 7$, there exists $t$ such that $n_t {\leq} 2^{t+1} {\triangleq} n_t'$.
Then, for all $k {<} \frac{n_t'+1}{2}$, there exists a BCH code with block length $n_t'$ and parity length of $m_t' {=} O(k\log n_t')$.
We can then shorten that BCH code by removing $n_t' {-} n_t$ information bits from its codewords while keeping $m_t {=} m_t'$ unchanged. We can then show that $m_t {=} m_t' {=} O(k\log n_t)$, since $\log 2^{t+1} {\leq} \frac{4}{3}\log 2^t$.
Now, recall that $n_t \geq k^{1+\epsilon}$, where $\epsilon > 0$ (by assumption).
Then, $\frac{1}{\epsilon} \log \frac{n_t}{k} {\geq} \log k$. Therefore, we can show that $\log(n_t) {\leq} (1+\frac{1}{\epsilon})\log\left(\frac{n_t}{k}\right)$, by which we have $m_t {=} O(k\log\left(\frac{n_t}{k})\right)$.
Similarly, $m_r {=} O(k \log\left(\frac{n_r}{k})\right)$.
Thus, $m = O \left( k^2 \log \left(\frac{n_t}{k}\right) \log \left(\frac{n_r}{k}\right) \right)$. For a detailed proof, see our technical report \cite{technicalReport}.
\end{proof*}
\fi
\end{example}

Among all solutions in \cite{shabara2019source}, we are interested in the ones whose number of measurements, $m$, is closest to $\underbar{m}$. These solutions are \textit{``Optimum''} in the sense of reducing the number of measurements.
Recall that $\underbar{m}$ is the lower bound of all solutions based on \cite{shabara2019source} (see Eq. (\ref{eq:m_underbar})).
The following theorem shows that these optimum solutions scale similarly to $\underbar{m}$, which in turn shows that the lower bound of Theorem \ref{thm:main_specific} is tight.
\color{black}
\begin{theorem}
The number of measurements of ``Optimum Solutions'' of \cite{shabara2019source} scales as
$m = \Theta \left( k^2 \log \left(\frac{n_t}{k}\right) \log \left(\frac{n_r}{k}\right)\right)$
\end{theorem}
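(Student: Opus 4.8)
The plan is to squeeze the measurement count of an ``Optimum Solution'' between the universal lower bound $\underbar{m}$ of Eq.~(\ref{eq:m_underbar}) and the explicit upper bound supplied by the BCH construction of the example above, and then observe that both ends have the same order. This is a short argument: all the analytic content has already been packaged into Lemmas~\ref{lemma:BSC_lb} and \ref{lemma:measurement_upperbound}, so what remains is essentially a transitivity-of-inequalities step.

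First I would fix notation: let $m^{\mathrm{opt}}$ denote the number of measurements of an Optimum Solution of \cite{shabara2019source}, i.e.\ one whose $m$ is as close as possible to $\underbar{m}$. Since Eq.~(\ref{eq:m_underbar}) asserts $m \geq \underbar{m}$ for \emph{every} solution produced by that framework, in particular $m^{\mathrm{opt}} \geq \underbar{m}$. By Lemma~\ref{lemma:BSC_lb}, $\underbar{m} = \Theta\!\left(k^2 \log(n_t/k)\log(n_r/k)\right)$, hence $m^{\mathrm{opt}} = \Omega\!\left(k^2 \log(n_t/k)\log(n_r/k)\right)$, which gives the lower half of the claimed $\Theta$.

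For the upper half I would invoke the BCH example: under the standing assumption $n_t, n_r \geq k^{1+\epsilon}$, that construction is a legitimate instance of the framework of \cite{shabara2019source} for arbitrary $n_t,n_r$ (after the codeword-shortening step used in the proof of Lemma~\ref{lemma:measurement_upperbound}), and by Lemma~\ref{lemma:measurement_upperbound} its measurement count $m_{\mathrm{BCH}}$ satisfies $m_{\mathrm{BCH}} = O\!\left(k^2 \log(n_t/k)\log(n_r/k)\right)$. Because an Optimum Solution is, by definition, no costlier than any particular solution of \cite{shabara2019source}, we get $m^{\mathrm{opt}} \leq m_{\mathrm{BCH}}$. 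Chaining, $\underbar{m} \leq m^{\mathrm{opt}} \leq m_{\mathrm{BCH}}$, and since the two extremes are respectively $\Omega$ and $O$ of $k^2 \log(n_t/k)\log(n_r/k)$, the conclusion $m^{\mathrm{opt}} = \Theta\!\left(k^2 \log(n_t/k)\log(n_r/k)\right)$ follows.

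I do not expect a real obstacle; the only point deserving care is the first step of the BCH half—confirming that a BCH-based syndrome source code of the required block lengths $n_t$ and $n_r$ can always be obtained by shortening a code of length $2^t-1$, so that the $O$-bound genuinely applies to \emph{this} problem instance and not merely to lengths of the form $2^t-1$—and that detail is exactly what the proof sketch of Lemma~\ref{lemma:measurement_upperbound} (and the technical report \cite{technicalReport}) supplies.
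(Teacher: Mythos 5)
Your proposal is correct and matches the paper's own argument exactly: the lower half comes from $m \geq \underbar{m}$ together with Lemma \ref{lemma:BSC_lb}, and the upper half from the BCH-based construction of Lemma \ref{lemma:measurement_upperbound} plus the fact that an optimum solution is no costlier than the BCH instance. The extra care you flag about shortening BCH codes to arbitrary block lengths is precisely what the proof of Lemma \ref{lemma:measurement_upperbound} already handles, so nothing further is needed.
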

\ifdefined\longversion
\begin{proof}
\else
\begin{proof*}
\fi
By Lemma \ref{lemma:BSC_lb}, we have that all solutions, including the optimal, have $m = \Omega \left( k^2 \log \left(\frac{n_t}{k}\right) \log \left(\frac{n_r}{k}\right) \right)$.
Moreover, Lemma \ref{lemma:measurement_upperbound} shows that solutions based on BCH codes result in $m = O \left( k^2 \log \left(\frac{n_t}{k}\right) \log \left(\frac{n_r}{k}\right)\right)$. Since optimal solutions have a number of measurements smaller than or equal to those obtained by BCH codes, then they also have the same asymptotic upper bound. Therefore, optimal solutions have $m = \Theta \left( k^2 \log \left(\frac{n_t}{k}\right) \log \left(\frac{n_r}{k}\right)\right)$ follows.
\ifdefined\longversion
\end{proof}
\else
\end{proof*}
\fi

\begin{remark}
Even though we have shown that the bound of Theorem \ref{thm:main_specific} is tight, we have demonstrated this tightness in the asymptotic regime of $n$ and $k$.
The dependence on the RIP constant, $\delta$, however, remains an open question.
\end{remark}

\section{Conclusion}
In this paper, we study the fundamental lower bound governing the number of measurements, required for estimating sparse, large MIMO channels.
We consider a simple analog transceiver, where each channel measurements is obtained using a specific combination of beamforming vectors at the transmitter and receiver.
The currently known lower bound on number of measurements is $\Omega\left(k\log\left(\frac{n_r n_t}{k}\right) \right)$.
We derive a tight lower measurement bound, which scales asymptotically as $\Omega\left(k^2\log\left(\frac{n_t}{k}\right) \log\left(\frac{n_r}{k}\right) \right)$. 
The tightness of our derived bound is demonstrated by showing that there exists a solution with $m = O\left(k^2\log\left(\frac{n_t}{k}\right) \log\left(\frac{n_r}{k}\right) \right)$.

\appendices

\ifdefined\longversion
\else
\renewcommand{\thesectiondis}[2]{\Alph{section}:}
\fi




\ifdefined\longversion
\section{Proof of Lemma \ref{lemma:RIP_kronecker}}
\begin{proof}
Let $\boldsymbol{A} \in \mathbb{R}^{m_a \times n_a}$ and $\boldsymbol{B} \in \mathbb{R}^{m_b \times n_b}$. Denote by $\boldsymbol{a_i}$ the $i^{\text{th}}$ column of $\boldsymbol{A}$ and let $a_{i,j}$ be its $j^{\text{th}}$ element. And define $\boldsymbol{C} \triangleq \boldsymbol{A} \otimes \boldsymbol{B}$. Denote by $\delta_c$ the $k-$RIP constant of $\boldsymbol{C}$.

We will first show that $\delta_c \geq \delta_b$.
To that end, let us define the sets $\mathcal{X}_c$ and $\mathcal{X}_b$ as:
\begin{align}
\mathcal{X}_c &\triangleq \{ \boldsymbol{x_c} \in  \mathbb{R}^{n_a n_b} : \norm{\boldsymbol{x_c}}_0 \leq k \}\\
\mathcal{X}_b &\triangleq \{ \boldsymbol{x_b} \in  \mathbb{R}^{n_b} : \norm{\boldsymbol{x_b}}_0 \leq k \}
\end{align}
Since $\delta_c$ is the $k-$RIP constant of $\boldsymbol{C}$, then $\forall \boldsymbol{x_c} \in \mathcal{X}_c$ we have
\begin{equation}\label{eq:X_c}
(1-\delta_c)\norm{\boldsymbol{x_c}}^2 \leq  \norm{\boldsymbol{C}\boldsymbol{x_c}}^2 \leq (1+\delta_c) \norm{\boldsymbol{x_c}}^2
\end{equation}
Now, we will focus our attention on a smaller class of vectors $\boldsymbol{x}^{(\boldsymbol{b})}_{\boldsymbol{c}}$, which constitute a strict subset of $\mathcal{X}_c$, defined as follows
\ifdefined\longversion
\begin{equation}
\boldsymbol{x}^{(\boldsymbol{b})}_{\boldsymbol{c}} \triangleq
\begin{pmatrix}
\boldsymbol{b} \\
\boldsymbol{0} \\
\vdots \\
\boldsymbol{0}
\end{pmatrix},
\end{equation}
\else
\begin{equation}
\boldsymbol{x}^{(\boldsymbol{b})}_{\boldsymbol{c}} \triangleq
\begin{pmatrix}
\boldsymbol{b}^T &
\boldsymbol{0}^T &
\dots            &
\boldsymbol{0}^T
\end{pmatrix}^T,
\end{equation}
\fi
where $\boldsymbol{b} \in \mathcal{X}_b$ and $\boldsymbol{x}^{(\boldsymbol{b})}_{\boldsymbol{c}} \in \mathbb{R}^{n_a n_b}$.
Then, by construction, $\boldsymbol{x}^{(\boldsymbol{b})}_{\boldsymbol{c}} \in \mathcal{X}_c$, and $\norm{\boldsymbol{x}^{(\boldsymbol{b})}_{\boldsymbol{c}}} = \norm{\boldsymbol{b}}$.
Now, observe that $\small \norm{\boldsymbol{C}\boldsymbol{x}^{(\boldsymbol{b})}_{\boldsymbol{c}}}^2$ is
\begin{align}\small
\norm{\boldsymbol{C}\boldsymbol{x}^{(\boldsymbol{b})}_{\boldsymbol{c}}}^2 =
\norm{\left(\boldsymbol{A} \otimes  \boldsymbol{B}\right)\boldsymbol{x}^{(\boldsymbol{b})}_{\boldsymbol{c}}}^2
&= \sum_{i=1}^{n_a} |a_{i,1}|^2 \norm{\boldsymbol{B}\boldsymbol{b}}^2  \\
&= \norm{\boldsymbol{a_1}}^2 \norm{\boldsymbol{B}\boldsymbol{b}}^2
\end{align}
Since $\delta_b$ is the $k-$RIP constant of $\boldsymbol{B}$, then $\forall \boldsymbol{b} \in \mathcal{X}_b$ we have
\begin{equation} \label{eq:db} \small
\norm{\boldsymbol{a_1}}^2 (1-\delta_b) \norm{\boldsymbol{b}}^2
\leq \underbrace{\norm{\boldsymbol{a_1}}^2 \norm{\boldsymbol{B}\boldsymbol{b}}^2}_{= \norm{\boldsymbol{C}\boldsymbol{x}^{(\boldsymbol{b})}_{\boldsymbol{c}}}^2}
\leq \norm{\boldsymbol{a_1}}^2 (1+\delta_b) \norm{\boldsymbol{b}}^2
\end{equation}
Since (i) the space of all possible constructions of $\boldsymbol{x}^{(\boldsymbol{b})}_{\boldsymbol{c}}$ is a strict subset of $\mathcal{X}_c$, and since (ii) $\delta_b$ is the smallest constant such that Eq. (\ref{eq:db}) holds, then the following two equations must always hold true
\begin{align}
(1-\delta_c) \norm{\boldsymbol{b}}^2 \leq &\norm{\boldsymbol{a_1}}^2 (1-\delta_b) \norm{\boldsymbol{b}}^2 \tag{B1} \label{eq:B1}\\
&\norm{\boldsymbol{a_1}}^2 (1+\delta_b) \norm{\boldsymbol{b}}^2 \leq (1+\delta_c) \norm{\boldsymbol{b}}^2 \tag{B2} \label{eq:B2}
\end{align}
If $\norm{\boldsymbol{a_1}}^2 \leq 1$, then from Eq. (\ref{eq:B1}) we have $\delta_c \geq \delta_b$. Otherwise, if $\norm{\boldsymbol{a_1}}^2 \geq 1$, then from Eq. (\ref{eq:B2}) we have $\delta_c \geq \delta_b$.
Therefore, $\delta_b \leq \delta_c$ is always true.
Now, define $C' \triangleq \boldsymbol{B} \otimes \boldsymbol{A}$.
By the properties of the Kronecker product, we know that there exist two ``Permutation'' matrices, call them $\boldsymbol{P_{\rho}}$ and $\boldsymbol{P_c}$, such that:
\begin{equation}
\boldsymbol{C'} =
\boldsymbol{P_{\rho}} \boldsymbol{C} \boldsymbol{P_c} = 
\boldsymbol{P_{\rho}} \left(\boldsymbol{A} \otimes \boldsymbol{B}\right) \boldsymbol{P_c},
\end{equation}
where $\boldsymbol{P_{\rho}}$ permutes the rows of $\boldsymbol{C}$, and $\boldsymbol{P_c}$ permutes the columns of $\boldsymbol{P_{\rho}} \boldsymbol{C}$.
Then, we have that
\begin{align}
\norm{\boldsymbol{C'} \boldsymbol{x_c}}
= \norm{\boldsymbol{P_{\rho}} \boldsymbol{C} \boldsymbol{P_c} \boldsymbol{x_c}}
\stackrel{(i)}{=} \norm{\boldsymbol{C} \boldsymbol{P_c} \boldsymbol{x_c}}.
\end{align}
Also, observe that if $\boldsymbol{x_c} \in \mathcal{X}_c$, then $\boldsymbol{P_c}\boldsymbol{x_c}$ has the same sparsity level as $\boldsymbol{x_c}$ and hence it lies in $\mathcal{X}_c$, as well. Therefore, it follows that
\begin{equation}
(1-\delta_c) \norm{\boldsymbol{x_c}}^2
\leq \underbrace{\norm{\boldsymbol{C'} \boldsymbol{x_c}}^2}_{= \norm{\boldsymbol{C} \boldsymbol{P_c} \boldsymbol{x_c}}^2}
\leq (1+\delta_c) \norm{\boldsymbol{x_c}}^2,
\end{equation}
which shows that both $\boldsymbol{C}$ and $\boldsymbol{C'}$ have the same $k-$RIP constant $\delta_c$. Then, it follows that $\delta_c \geq \delta _a$.
Therefore, $\delta_c \geq \max\{\delta_a, \delta_b\}$, which concludes our proof.
\end{proof}

\label{append:RIP_KP}

\section{Proof Of Lemma \ref{lemma:BSC_lb}}
\begin{proof}
First, observe that ${n \choose i} < {n \choose k}$ for all $0 \leq i < k$ such that $k < \frac{n+1}{2}$.
Thus, for $k < \frac{n+1}{2}$, we have that
\begin{equation}
\sum_{i = 0}^k {n \choose i} \leq \left( k+1\right) {n\choose k} \footnotesize
\end{equation}
By taking the logarithm of the previous equation, we get
\small{
\begin{align}
\Rightarrow \log \left( \sum_{i = 0}^k {n \choose i} \right) &\leq \log \left( k+1\right) + \log {n \choose k} \footnotesize \\
&\leq \log \left( k+1\right) + k \log  \left( \frac{n}{k}\right) + k \log e, \label{eq:SC_upperBound}
\end{align}}%
\normalsize
where (\ref{eq:SC_upperBound}) follows from the following popular bounds on $n \choose k$ \cite{cormen2009introduction}
\begin{equation}\footnotesize
\left( \frac{n}{k}\right)^k \leq {n \choose k} \leq \left( \frac{ne}{k}\right)^k. \label{eq:nchoosekBounds}
\end{equation}
\normalsize
From (\ref{eq:nchoosekBounds}) we also have that $\left( \frac{n}{k}\right)^k \leq {n \choose k} \leq \sum_{i = 0}^k {n \choose i}$. This gives us the following upper and lower bounds on $\Upsilon$ where
\begin{equation}\footnotesize
\Upsilon \triangleq \ceil*{\log_2\left(\sum_{i = 0}^{k} {n \choose i}\right)}
\end{equation}
\begin{equation}\small
k \log \left( \frac{n}{k}\right) \leq \Upsilon \leq \log (k+1) + k \log \left( \frac{n}{k}\right) + k \log e +1
\end{equation}
Therefore, we have that $\Upsilon$ is in both $\Omega\left(k\log\frac{n}{k}\right)$ and $O\left(k\log\frac{n}{k}\right)$.
Hence, $\Upsilon \in \Theta\left(k\log\frac{n}{k}\right)$.
Finally, we can conclude that $\underbar{m} = \Upsilon\vert_{n = n_t} \Upsilon\vert_{n = n_r}$ is asymptotically bounded as
\begin{equation*}
\underbar{m} = \Theta\left( k^2 \log \left( \frac{n_t}{k}\right) \log \left( \frac{n_r}{k}\right) \right) \qedhere
\end{equation*}
\end{proof}
\label{append:lemma_BSC_lb}

\section{Proof Of Lemma \ref{lemma:measurement_upperbound}}
\normalsize
\begin{proof}
First, we will show that $m_t \leq c k \log \frac{n_t}{k}$ for some $c >0 $. Let $n_t$ be an arbitrary integer such that $n_t >=7$.
Then, there exists a positive integer $t\geq3$ such that
$2^{t}-1 \leq n_t < 2^{t+1}-1$.
If $n_t = 2^{t}-1$, then there exists a BCH code with a number of parity check bits $m_t$ such that
$m_t \leq k \log(n_t +1)$. Hence, there exists a positive constant $c_0 \in \mathbb{R}$ such that $m_t \leq c_0 k \log(n_t)$.
On the other hand, if $2^{t}-1 < n_t < 2^{t+1}-1$, then we can construct a linear block code of length $n_t$ by shortening a BCH code with block length $n_t' = 2^{t+1}-1$ and number of parity check bits $m_t \leq k \log(n_t' + 1)$. This shortening process leaves the number of parity bits intact, hence, we have that $m_t \leq k \log\left( n_t'+1 \right)$, but it removes $n_t' - n_t$ information bits from the codewords. Thus, we have
\small{
\begin{equation}
m_t \leq k \log\left( 2^{t+1} \right) \leq \frac{4}{3} k \log\left( 2^t \right) < \frac{4}{3} k \log\left( n_t + 1 \right).
\end{equation}}%

\normalsize
Now, recall that $n_t \geq k^{1+\epsilon}$, where $\epsilon > 0$ (by assumption).
Then, we have that $\frac{1}{\epsilon} \log \frac{n_t}{k} \geq \log k$. Therefore,
\small{
\begin{align}
\log(n_t)
&=            \log\left( \frac{n_t}{k} \times k \right)
=     \left( \log\left( \frac{n_t}{k} \right) + \log \left( k \right) \right)\\
&\leq  \left( \log\left( \frac{n_t}{k} \right) + \frac{1}{\epsilon} \log \frac{n_t}{k}\right)
=     \left( 1 + \frac{1}{\epsilon}\right) \log\left( \frac{n_t}{k} \right)
\end{align}}%
\normalsize
Thus, if follows that $m_t = O(k\log\left( \frac{n_t}{k} \right))$ for arbitrary $n_t \in \mathbb{N}$.\\
Similarly, we have $m_r = O\left( k \log \left( \frac{n_r}{k} \right) \right)$.
Thus, it follows that $m = O(k^2 \log\left( \frac{n_t}{k} \right) \log\left( \frac{n_r}{k} \right))$.
\end{proof}
\label{append:lemma_proof_upperBound}

\section{Spark of the Kronecker Product}
\begin{lemma}\label{lemma:sparkOfTheKroneckerProduct}
Let $\boldsymbol{A} \in \mathbb{C}^{m_a \times n_a}$ and $\boldsymbol{B} \in \mathbb{C}^{m_b \times n_b}$ be such that $\min\{\spark(\boldsymbol{A}), \spark(\boldsymbol{B}) \} > k$. Then, $\spark(\boldsymbol{A} \otimes \boldsymbol{B}) > k$.
\end{lemma}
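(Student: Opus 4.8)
The plan is to reduce the claim to the elementary characterization of spark: $\spark(\boldsymbol{M}) > k$ precisely when $\boldsymbol{M}\boldsymbol{v} \ne \boldsymbol{0}$ for every nonzero $\boldsymbol{v}$ with $\norm{\boldsymbol{v}}_0 \le k$ (equivalently, every $k$ columns of $\boldsymbol{M}$ are linearly independent). So it suffices to fix an arbitrary nonzero $\boldsymbol{z} \in \mathbb{C}^{n_a n_b}$ with $\norm{\boldsymbol{z}}_0 \le k$ and show $(\boldsymbol{A} \otimes \boldsymbol{B})\boldsymbol{z} \ne \boldsymbol{0}$.

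The first step is to pass to matrix form. Write $\boldsymbol{z} = \vect(\boldsymbol{Z})$ for the unique $\boldsymbol{Z} \in \mathbb{C}^{n_b \times n_a}$; by the vectorization identity $\vect(\boldsymbol{M}\boldsymbol{X}\boldsymbol{N}) = (\boldsymbol{N}^T \otimes \boldsymbol{M})\vect(\boldsymbol{X})$ already used in the paper, we get $(\boldsymbol{A}\otimes\boldsymbol{B})\boldsymbol{z} = \vect(\boldsymbol{B}\boldsymbol{Z}\boldsymbol{A}^T)$. Since $\norm{\boldsymbol{z}}_0$ equals the number of nonzero entries of $\boldsymbol{Z}$, the claim becomes: if $\boldsymbol{Z} \ne \boldsymbol{0}$ has at most $k$ nonzero entries, then $\boldsymbol{B}\boldsymbol{Z}\boldsymbol{A}^T \ne \boldsymbol{0}$.

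The core is a two-stage peeling argument that tracks sparsity. Stage one: $\boldsymbol{Z}$ has at least one nonzero column $\boldsymbol{z}_j$, and $\norm{\boldsymbol{z}_j}_0 \le \norm{\boldsymbol{Z}}_0 \le k$, so $\spark(\boldsymbol{B}) > k$ gives $\boldsymbol{B}\boldsymbol{z}_j \ne \boldsymbol{0}$; hence $\boldsymbol{Y} := \boldsymbol{B}\boldsymbol{Z} \ne \boldsymbol{0}$. Crucially, left multiplication by $\boldsymbol{B}$ acts within columns, so a column of $\boldsymbol{Y}$ can be nonzero only where the corresponding column of $\boldsymbol{Z}$ is nonzero; since $\boldsymbol{Z}$ has at most $k$ nonzero entries it has at most $k$ nonzero columns, so $\boldsymbol{Y}$ has at most $k$ nonzero columns. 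Stage two: because $\boldsymbol{Y}$ has at most $k$ nonzero columns, every row of $\boldsymbol{Y}$ is $k$-sparse; pick a nonzero row $\boldsymbol{y}_i$ of $\boldsymbol{Y}$ (viewed as a vector in $\mathbb{C}^{n_a}$), which exists since $\boldsymbol{Y} \ne \boldsymbol{0}$ and satisfies $\norm{\boldsymbol{y}_i}_0 \le k$. Then $\spark(\boldsymbol{A}) > k$ gives $\boldsymbol{A}\boldsymbol{y}_i \ne \boldsymbol{0}$, and the $i$-th row of $\boldsymbol{Y}\boldsymbol{A}^T$ is $(\boldsymbol{A}\boldsymbol{y}_i)^T \ne \boldsymbol{0}$. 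Hence $\boldsymbol{B}\boldsymbol{Z}\boldsymbol{A}^T = \boldsymbol{Y}\boldsymbol{A}^T \ne \boldsymbol{0}$, which completes the argument.

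The main obstacle to watch is exactly the sparsity bookkeeping between the two stages: after forming $\boldsymbol{B}\boldsymbol{Z}$ the entries are generically dense, so one cannot naively reapply a spark hypothesis to its columns. What saves the argument is the observation that $\boldsymbol{B}\boldsymbol{Z}$ inherits the small \emph{column support} of $\boldsymbol{Z}$, which in turn forces its \emph{rows} to be $k$-sparse — precisely the structure needed to invoke $\spark(\boldsymbol{A}) > k$. Everything else (the vectorization identity and the column/row support counting) is routine.
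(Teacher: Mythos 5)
Your argument is correct, and it takes a genuinely different route from the paper's. The paper works directly at the level of columns of $\boldsymbol{A}\otimes\boldsymbol{B}$: it picks $k$ columns $\boldsymbol{a_{p_i}}\otimes\boldsymbol{b_{t_i}}$, groups them (w.l.o.g. after reindexing) by equal $\boldsymbol{A}$-columns, collapses each group into a residual $\boldsymbol{r_{d_j}}=\sum_i\alpha_i\boldsymbol{b_{t_i}}$, uses $\spark(\boldsymbol{B})>k$ to conclude that a group containing a nonzero coefficient has $\boldsymbol{r_{d_j}}\neq\boldsymbol{0}$, and then uses independence of the distinct $\boldsymbol{a_{p_{d_j}}}$ (from $\spark(\boldsymbol{A})>k$) to conclude $\sum_j\boldsymbol{a_{p_{d_j}}}\otimes\boldsymbol{r_{d_j}}\neq\boldsymbol{0}$. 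You instead pass to matrix form via $(\boldsymbol{A}\otimes\boldsymbol{B})\vect(\boldsymbol{Z})=\vect(\boldsymbol{B}\boldsymbol{Z}\boldsymbol{A}^T)$ and peel in two stages, with the key bookkeeping that $\boldsymbol{Z}$ having at most $k$ nonzero entries forces $\boldsymbol{B}\boldsymbol{Z}$ to have at most $k$ nonzero columns, hence $k$-sparse rows, so $\spark(\boldsymbol{A})>k$ applies to a nonzero row. What your route buys is the elimination of the w.l.o.g. grouping and reindexing, and a uniform treatment of the sparsity accounting; indeed the paper's final step (a sum $\sum_j\boldsymbol{a}_j\otimes\boldsymbol{r}_j$ with independent $\boldsymbol{a}_j$ and some $\boldsymbol{r}_j\neq\boldsymbol{0}$ is nonzero) is essentially your second stage in disguise, since that sum is $\vect$ of a matrix whose rows/columns you test against $\boldsymbol{A}$. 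What the paper's approach buys is that it needs no vectorization identity and makes visible exactly which small sets of columns of $\boldsymbol{A}$ and $\boldsymbol{B}$ the spark hypotheses are invoked on (at most $k$ distinct $\boldsymbol{a}$-columns overall, and at most $k$ distinct $\boldsymbol{b}$-columns within a group). Both arguments are elementary, work over $\mathbb{C}$ without any column-normalization assumption, and establish the same conclusion.
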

\begin{proof}
Let $\left( \boldsymbol{a_i} \right)_{i = 1}^{n_a}$  and $\left( \boldsymbol{b_j} \right)_{j = 1}^{n_b}$ be the columns of $\boldsymbol{A}$ and $\boldsymbol{B}$, respectively.
Since $\spark(\boldsymbol{A}) > k$, then any $k$ columns of $\boldsymbol{A}$ are linearly independent. Similarly, any $k$ columns of $\boldsymbol{B}$ are also independent.
Observe that any column of $\boldsymbol{A} \otimes \boldsymbol{B}$ is of the form $\boldsymbol{a_i} \otimes \boldsymbol{b_j}$. Pick any $k$ columns of $\boldsymbol{A} \otimes \boldsymbol{B}$, i.e., $\boldsymbol{a_{p_1}} \otimes \boldsymbol{b_{t_1}}$, $\boldsymbol{a_{p_2}} \otimes \boldsymbol{b_{t_2}}$, $\dots$, $\boldsymbol{a_{p_k}} \otimes \boldsymbol{b_{t_k}}$.
We will show that $\sum_{i = 1}^k \alpha_i \boldsymbol{a_{p_i}} \otimes \boldsymbol{b_{t_i}} = \boldsymbol{0}$ if and only if $\alpha_i = 0 \; \forall i$.

\ifdefined\longversion
Assume, without loss of generality, that
\begin{align*}\footnotesize
\boldsymbol{a_{p_1}} &= \dots = \boldsymbol{a_{p_{d_1}}}, \text{  and}
  \quad \sum_{i = 1}^{d_1} \alpha_i \boldsymbol{b_{t_i}} = \boldsymbol{r_{d_1}} \\
\boldsymbol{a_{p_{d_1+1}}} &= \dots = \boldsymbol{a_{p_{d_2}}}, \text{  and}
  \quad \sum_{i = d_1+1}^{d_2} \alpha_i \boldsymbol{b_{t_i}} = \boldsymbol{r_{d_2}}\\
&\quad \quad \vdots \hspace{3cm}\vdots \\
\boldsymbol{a_{p_{d_{l-1}+1}}} &= \dots = \boldsymbol{a_{p_{d_l}}}, \text{  and}
  \quad \sum_{i = d_{l-1}+1}^{d_l = k} \alpha_i \boldsymbol{b_{t_i}} = \boldsymbol{r_{d_l}}\\
\end{align*}
Then, we can rewrite $\sum_{i = 1}^k \alpha_i \boldsymbol{a_{p_i}} \otimes \boldsymbol{b_{t_i}}$ as:
\begin{align*}\tiny
&\boldsymbol{a_{p_{d_1}}} \otimes \underbrace{\left( \sum_{i = 1}^{d_1} \alpha_i \boldsymbol{b_{t_i}}\right)}_{\boldsymbol{r_{d_1}}} + 
 \boldsymbol{a_{p_{d_2}}} \otimes \underbrace{\left( \sum_{i = d_1+1}^{d_2} \alpha_i \boldsymbol{b_{t_i}} \right)}_{\boldsymbol{r_{d_2}}} \\
& \hspace{3cm} + \dots + \boldsymbol{a_{p_{d_l}}} \otimes \underbrace{\left( \sum_{i = d_{l-1}+1}^{d_l} \alpha_i \boldsymbol{b_{t_i}} \right)}_{\boldsymbol{r_{d_l}}}
\end{align*}
\else
Assume w.l.o.g. that
$\small \boldsymbol{a_{p_1}}           {=} \dots {=} \boldsymbol{a_{p_{d_1}}}$,
$\small \boldsymbol{a_{p_{d_1+1}}}     {=} \dots {=} \boldsymbol{a_{p_{d_2}}}$, $\dots$,
$\small \boldsymbol{a_{p_{d_{l-1}+1}}} {=} \dots {=} \boldsymbol{a_{p_{d_l}}}$.
And define
$\small \boldsymbol{r_{d_1}} {\triangleq} \sum_{i = 1}^{d_1} \alpha_i \boldsymbol{b_{t_i}}$,
$\small \boldsymbol{r_{d_2}} {\triangleq} \sum_{i = d_1+1}^{d_2} \alpha_i \boldsymbol{b_{t_i}}$,
$\dots$,
$\small \boldsymbol{r_{d_l}} {\triangleq} \sum_{i = d_{l-1}+1}^{d_l = k} \alpha_i \boldsymbol{b_{t_i}}$.
Then, we can rewrite $\sum_{i = 1}^k \alpha_i \boldsymbol{a_{p_i}} \otimes \boldsymbol{b_{t_i}}$ as:
$\boldsymbol{a_{p{d_1}}}  \otimes \boldsymbol{r_{d_1}} + 
\boldsymbol{a_{p{d_2}}}  \otimes \boldsymbol{r_{d_2}} + \dots +
\boldsymbol{a_{p{d_l}}}  \otimes \boldsymbol{r_{d_l}}$.
\fi
Suppose there exists at least one value $i_0$ such that $\alpha_{i_0} \neq 0$, then $\exists \boldsymbol{r_{d_i}} \neq 0$ since all $\boldsymbol{b_{t_i}}$ are independent. Finally, since all $\boldsymbol{a_{p{d_i}}}$ are independent,
then $\sum_{i = 1}^k \alpha_i \boldsymbol{a_{p_i}} \otimes \boldsymbol{b_{t_i}} \neq \boldsymbol{0}$.
Therefore, the $k$ columns $\left( \boldsymbol{a_{p_i}} \otimes \boldsymbol{b_{t_i}} \right)_{i = 1}^k$, of $\boldsymbol{A} \otimes \boldsymbol{B}$, are independent. Hence, $\spark(\boldsymbol{A} \otimes \boldsymbol{B}) > k$.
\end{proof}
\label{append:spark_kron}
\fi


\bibliographystyle{IEEEtran}
\linespread{1}
\bibliography{mmWaveCsBound}

%
%

\end{document}